\newtheorem{theorem}{Theorem}
\theoremstyle{definition}
\newtheorem{definition}{Definition}
\theoremstyle{remark}
\newtheorem{remark}{Remark}
\newtheorem{assumption}{Assumption}
\newtheorem{example}{Example}
\algrenewcommand\algorithmicrequire{\textbf{Input:}}
\newcommand{\mpcommentout}[1]{}
\newcommand{\ra}{\rightarrow}
\newcommand{\Z}{\mathbb{Z}}
\newcommand{\N}{\mathbb{N}}
\newcommand{\E}[1]{\mathbb{E}\left[ #1 \right]}
\newcommand{\takeofftau}{\tau}
\newcommand{\cruisetau}{\tau_c}
\newcommand{\vertset}{\mathcal{V}}
\newcommand{\vertcard}[1]{V_{#1}}
\newcommand{\routeset}{\mathcal{P}}
\newcommand{\routecard}{P}
\newcommand{\aircraftset}{\mathcal{A}}
\newcommand{\aircraftcard}{A}
\newcommand{\slotset}[1]{\mathcal{S}_{#1}}
\newcommand{\slotcard}[1]{S_{#1}}
\newcommand{\routingvec}[2]{R_{#1}^{#2}}
\newcommand{\numsafesch}{R}
\newcommand{\safeschset}{\mathcal{R}}
\newcommand{\Cyclelength}{T_{\text{cyc}}}
\newcommand{\Arrivalrate}[1]{\lambda_{#1}}
\newcommand{\Queuelength}[1]{Q_{#1}}
\newcommand{\Lyap}[1]{f(#1)}
\begin{document}
\title{\LARGE \bf
A Traffic Management Framework for On-Demand Urban Air Mobility Systems}
\author{Milad~Pooladsanj$^{1}$,
        Ketan~Savla$^{2}$,
        and Petros A. Ioannou$^{1}$
%        Paul~Bogdan$^{1}$,
%        and~Jyotirmoy~V.~Deshmukh$^{3}$% <-this % stops a space
\thanks{$^{1}$M. Pooladsanj and P. Ioannou are with the Department of Electrical Engineering, University of Southern California, Los Angeles, CA 90007 USA {\tt\small (email: pooladsa@usc.edu; ioannou@usc.edu)}.}% <-this % stops a space
\thanks{$^{2}$K. Savla is with the Department of Civil and Environmental Engineering, University of Southern California, Los Angeles CA 90089 USA {\tt\small (email: ksavla@usc.edu)}. K. Savla has financial interest in Xtelligent, Inc.}
%\thanks{$^{3}$J. Deshmukh is with the Department of Computer
%Science, University of Southern California, Los Angeles CA 90089 USA {\tt\small jyotirmoy.deshmukh@usc.edu}.}
% <-this % stops a space
\thanks{
This research was supported in part by the PWICE institute at USC.
}
}\maketitle
\begin{abstract}
Urban Air Mobility (UAM) offers a solution to current traffic congestion by providing on-demand air mobility in urban areas. Effective traffic management is crucial for efficient operation of UAM systems, especially for high-demand scenarios. In this paper, we present a centralized traffic management framework for on-demand UAM systems. Specifically, we provide a scheduling policy, called VertiSync, which schedules the aircraft for either servicing trip requests or rebalancing in the system subject to aircraft safety margins and energy requirements. We characterize the system-level throughput of VertiSync, which determines the demand threshold at which passenger waiting times transition from being stabilized to being increasing over time. We show that the proposed policy is able to maximize throughput for sufficiently large fleet sizes. We demonstrate the performance of VertiSync through a case study for the city of Los Angeles, and show that it significantly reduces passenger waiting times compared to a first-come first-serve scheduling policy. 
\end{abstract}

\section{Introduction}
Traffic congestion is a significant issue in urban areas, leading to increased travel times, reduced productivity, and environmental concerns. A potential solution to this issue is Urban Air Mobility (UAM), which aims to use the urban airspace for on-demand mobility \cite{holden2016uber}. A crucial aspect of UAM systems, especially in high-demand regimes, is traffic management \cite{mueller2017enabling}. The objective of traffic management is to efficiently use the limited UAM resources, such as the airspace, takeoff and landing areas, and the aircraft, to meet the demand. The purpose of this paper is to systematically design and analyze a traffic management policy for on-demand UAM networks.
\par
The UAM traffic management problem can be considered as a natural extension of the classic Air Traffic Flow Management (ATFM) problem \cite{bertsimas1998air}. The objective of ATFM is to optimize the flow of commercial air traffic to ensure safe and efficient operations in the airspace system, considering factors such as airspace and airport capacity constraints, weather conditions, and operational constraints \cite{bertsimas1998air, bertsimas2011integer}. The first departure point in the context of UAM is the unpredictable nature of demand. Unlike commercial air traffic where the demand is highly predictable weeks in advance, the UAM systems will be designed to provide on-demand services. This poses a significant planning challenge. 
\par
To address this problem, recent works such as \cite{chin2023protocol} have attempted to incorporate fairness considerations into the existing ATFM formulation to accommodate the on-demand nature of UAM. Other solutions include heuristic approaches such as first-come first-served scheduling \cite{pradeep2018heuristic} and simulations \cite{bosson2018simulation}. While previous works provide valuable insights into the operation of UAM systems, they do not explicitly address two critical aspects. First is the concept of \emph{rebalancing}: the UAM aircraft will need to be constantly redistributed in the network when the demand for some destinations is higher than others. Efficient rebalancing ensures the effectiveness and sustainability of on-demand UAM systems. The concept of rebalancing has been explored extensively in the context of on-demand ground transportation \cite{pavone2012robotic}. However, these studies predominantly use flow-level formulations which do not capture the safety and separation considerations associated with aircraft operations. The second aspect which has not been addressed in the UAM literature is a thorough characterization of the system-level throughput. Roughly, the throughput of a given traffic management policy determines the highest demand that the policy can handle \cite{POOLADSANJ2023TRC}. In the context of UAM, the throughput is tightly related to the notion of passenger waiting time. In particular, the throughput determines the demand threshold at which the expected passenger waiting time transitions from being stabilized to being increasing over time. Therefore, it is desirable to design a policy that achieves the maximum possible throughput.
\par
In light of the aforementioned gaps in the literature, we present a centralized traffic management framework for on-demand UAM networks. We propose a scheduling policy, called VertiSync, which synchronously schedules the aircraft for either servicing trip requests or rebalancing in the network. The primary contributions of this paper are as follows:
\begin{enumerate}
    \item Developing a scheduling policy, called VertiSync, for on-demand UAM networks, subject to aircraft safety margins and energy requirements.
    \item Incorporating the aspect of rebalancing into the UAM scheduling framework.
    \item Characterizing the system-level throughput of VertiSync, and demonstrating its effectiveness through a case study for the city of Los Angeles.
\end{enumerate}
%\par
%Through these contributions, we aim to provide a comprehensive solution to the scheduling challenges faced by on-demand UAM systems, thereby contributing to the advancement and practical implementation of efficient UAM solutions.
\par
The rest of the paper is organized as follows: in Section~\ref{sec:problem-formulation}, we describe the problem formulation. We provide our traffic management policy and characterize its throughput in Section~\ref{section:scheduling}. We provide the Los Angeles case study in Section~\ref{section:simulation}, and conclude the paper in Section~\ref{section:conclusion}.

\section{Problem Formulation}\label{sec:problem-formulation}

\subsection{UAM Network Structure}
%%%%%%%%%%%%%%%%%%%%% arXiv %%%%%%%%%%%%%%%%%%%%%%
\mpcommentout{
A UAM network will consist of a number of take-off/landing areas, called \emph{vertiports}, connected by a set of routes. Each vertiport may have multiple takeoff/landing pads, called \emph{vertipads}.
}
%%%%%%%%%%%%%%%%%%%%% ITSC %%%%%%%%%%%%%%%%%%%%%%
We describe a UAM network by a directed graph $\mathcal{G}$. A node in the graph $\mathcal{G}$ represents either a \emph{vertiport}, i.e., take-off/landing area, or an intermediate point where two or more routes cross paths. A link in the graph $\mathcal{G}$ represents a section of the routes that have the link in common. We let $\vertset$ be the set of vertiports and $\vertcard{}$ be its cardinality. We let $N_v$ be the total number of \emph{vertipads}, i.e., takeoff/landing pads, at vertiport $v \in \vertset$. An Origin-Destination (O-D) pair $p$ is an ordered pair $p=(o_p,d_p)$ where $o_p, d_p \in \vertset$ and there is at least one route from $o_p$ to $d_p$. We let $\routeset$ be the set of O-D pairs and $\routecard$ be its cardinality; see Figure~\ref{fig:graph-example}. To simplify the network representation and without loss of generality, we assume that each vertiport has exactly one outgoing link exclusively used for takeoffs from that vertiport, and a separate incoming link exclusively used for landings. For simplicity (and lack of existing routes), we also assume that there is at most one route between any two vertiports and that the UAM routes do not conflict with the current airspace.
%%%%%%%%%%%%%%%%%%%%%%%%%%%%%%%%%%%%%%%%%%%%%%%%%%%%%%%
%%%%%%%%%%%%%%%%%%%%% Figure %%%%%%%%%%%%%%%%%%%%%%%%%%
%%%%%%%%%%%%%%%%%%%%%%%%%%%%%%%%%%%%%%%%%%%%%%%%%%%%%%%
\begin{figure}[t]
    \centering
    \includegraphics[width=0.3\textwidth]{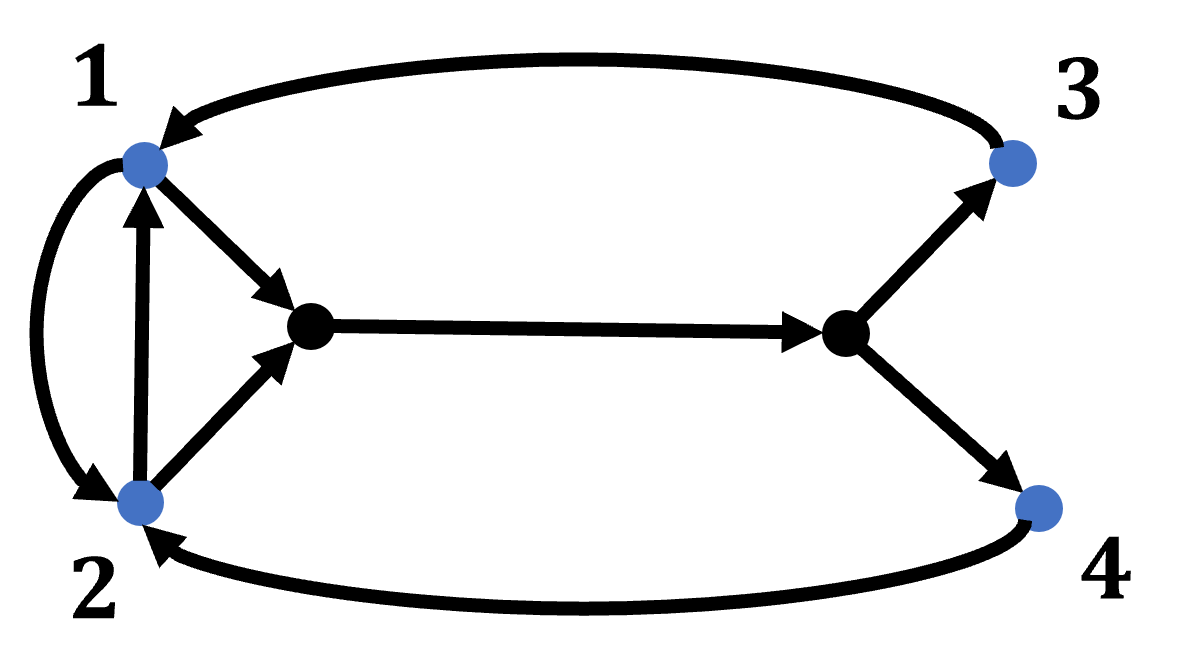}
    
    \vspace{0.1 cm}
    
    \caption{\sf A UAM network with $\vertcard{}=4$ vertiports (blue circles) and $\routecard=8$ O-D pairs $(1,3)$, $(1,4)$, $(2,3)$, $(2,4)$, $(3,1)$, $(4,2)$, $(1,2)$, and $(2,1)$.}
    \label{fig:graph-example}
\end{figure}
%%%%%%%%%%%%%%%%%%%%%%%%%%%%%%%%%%%%%%%%%%%%%%%%%%%%%%%
%%%%%%%%%%%%%%%%%%%%% Figure %%%%%%%%%%%%%%%%%%%%%%%%%%
%%%%%%%%%%%%%%%%%%%%%%%%%%%%%%%%%%%%%%%%%%%%%%%%%%%%%%%
%%%%%%%%%%%%%%%%%%%%%%%%%%%%%%%%%%%%%%%%%%%%%%%%%%%%%%%
%%%%%%%%%%%%%%%%%%%%% Remark %%%%%%%%%%%%%%%%%%%%%%%%%%
%%%%%%%%%%%%%%%%%%%%%%%%%%%%%%%%%%%%%%%%%%%%%%%%%%%%%%%
\begin{remark}
Given an O-D pair $p = (o_p, d_p)$, the opposite pair $q = (d_p, o_p)$ may or may not be an O-D pair. However, to enable rebalancing, it is natural to assume that there always exists a collection of routes that connect any two vertiports. %Moreover, if $q$ is an O-D pair, its route may or may not cross path with the route of the original pair $p$. 
\end{remark}
%%%%%%%%%%%%%%%%%%%%%%%%%%%%%%%%%%%%%%%%%%%%%%%%%%%%%%%
\par
In the next section, we will discuss the constraints associated with a UAM aircraft flight operation.

\subsection{Operational Constraints}\label{section:op-constraints}
In this section, we describe the constraints and assumptions related to UAM aircraft flight operations. Let $\aircraftset$ be the set of aircraft in the system and $\aircraftcard$ be its cardinality. Each aircraft's flight operation consists of the following three phases: 
\begin{itemize}
    \item \textbf{takeoff:} during this phase, the aircraft is positioned on a departure vertipad and passengers (if any) are boarded onto the aircraft before it is ready for takeoff. To position the aircraft on the departure vertipad, it is either transferred from a parking space or directly lands from a different vertiport. Let $\takeofftau$ denote the \emph{takeoff separation}, which represents the minimum time required between successive aircraft takeoffs from the same vertipad. In other words, the takeoff operations are completed in a $\takeofftau$-minute time window for every flight, which implies that the takeoff rate from each vertipad is at most one aircraft per $\takeofftau$ minutes. %Naturally, the takeoff operations take less than $\takeofftau$ minutes when there are no passengers. We assume that this time is fixed for every flight.
    \item \textbf{airborne:} to ensure safe operation, all UAM aircraft must maintain appropriate horizontal and vertical safety distance from each other while airborne. We assume that all UAM aircraft have the same characteristics so that these margins are the same for all the aircraft. Without loss of generality, we assume that different links of the graph $\mathcal{G}$ are at a safe horizontal and vertical distance from each other, except in the vicinity of the nodes where they intersect. Let $\cruisetau$ be the minimum time between two aircraft takeoffs with the same route from the same vertiport, ensuring that all the airborne safety margins are satisfied. Therefore, the takeoff rate from each vertiport is at most one aircraft per $\cruisetau$ minutes. We assume that $\takeofftau \geq \cruisetau$, i.e., the takeoff separation is more restrictive than the separation imposed by the safety margin, and $k_{\tau}:= \takeofftau/\cruisetau$ is integer-valued.  
    \item \textbf{landing:} once the aircraft lands, passengers (if any) are disembarked, new passengers (if any) are embarked, and the aircraft undergoes servicing. Thereafter, the aircraft is either transferred to a parking space or, if it has boarded new passengers or needs to be rebalanced, takes off to another vertiport. Similar to takeoff operations, we assume that the landing operations are completed within a $\takeofftau$-minute time window for every flight. That is, once an aircraft lands, the next takeoff or landing can occur after $\takeofftau$ minutes. Therefore, if two aircraft with the same route take off from the same vertipad at least $\takeofftau$ minutes apart, then they will be able to land on the same vertipad at their destination. We assume that the parking capacity at each vertiport is at least $\aircraftcard$ so that an arriving aircraft always clears the vertipad after landing. 
    \end{itemize}
%%%%%%%%%%%%%%%%%%%%% arXiv %%%%%%%%%%%%%%%%%%%%%%
%\mpcommentout{
%%%%%%%%%%%%%%%%%%%%%%%%%%%%%%%%%%%%%%%%%%%%%%%%%%%%%%%
%%%%%%%%%%%%%%%%%%%%% Remark %%%%%%%%%%%%%%%%%%%%%%%%%%
%%%%%%%%%%%%%%%%%%%%%%%%%%%%%%%%%%%%%%%%%%%%%%%%%%%%%%%
\begin{remark}
%The takeoff and landing separations depend on the vertiport operators while the airborne safety margins depend on the aircraft characteristics.
The above assumptions regarding the takeoff and landing separations, as well as the airborne safety margins, are practical given the current technological limitations \cite{bosson2018simulation, vascik2017constraint}. However, our results can be generalized and are not limited to these specific assumptions.
\end{remark}
%%%%%%%%%%%%%%%%%%%%%%%%%%%%%%%%%%%%%%%%%%%%%%%%%%%%%%%
%}
%%%%%%%%%%%%%%%%%%%%% ITSC %%%%%%%%%%%%%%%%%%%%%%
\par
In addition to the above assumptions, we consider an ideal case where there is no external disturbance such as adverse weather conditions. As a result, if an aircraft's flight trajectory satisfies the safety margins and the separation requirements, then the aircraft follows it without deviating from the trajectory. On the other hand, if its trajectory does not satisfy either of the safety or separation requirements, we assume that a lower-level controller, e.g., a pilot or a remote operator, handles the safe operation of the aircraft. We do not specify this controller in the paper since we only consider policies that guarantee before takeoff that the aircraft's route is clear and a vertipad is available for landing.

\subsection{Demand and Performance Metric}\label{section:demand}
In an on-demand UAM network, the demand is likely not known in advance. We use exogenous stochastic processes to capture the unpredictable nature of the demand. It will be convenient for performance analysis later on to adopt a discrete time setting. Let the duration of each time step be $\cruisetau$, which represents the the minimum time between two aircraft takeoffs with the same route from the same vertiport that guarantees the safety margins. The number of trip requests for an O-D pair $p \in \routeset$ is assumed to follow an i.i.d Bernoulli process with parameter $\Arrivalrate{p}$ independent of other O-D pairs. That is, at any given time step, the probability that a new trip is requested for the O-D pair $p$ is $\Arrivalrate{p}$ independent of everything else. Note that $\Arrivalrate{p}$ specifies the rate of new requests for the O-D pair $p$ in terms of the number of requests per $\cruisetau$ minutes. Let $\Arrivalrate{} := (\Arrivalrate{p})$ be the vector of arrival rates. 
\par
For each O-D pair, the trip requests are queued up in an unlimited capacity queue until they are \emph{serviced}, at which point they leave the queue. In order to be serviced, a request
must be assigned to an aircraft, and the aircraft must take
off from the verriport. A \emph{scheduling policy} is a rule that schedules the aircraft in the system for either servicing trip requests or rebalancing, i.e., taking off without passengers to service trip requests at other vertiports.
\par
The objective of the paper is to design a policy that can handle the maximum possible demand under the operational constraints discussed in Section~\ref{section:op-constraints}. The key performance metric to evaluate a policy is the notion of \emph{throughput} which we will now formalize. For $p \in \routeset$, let $\Queuelength{p}(t)$ be the number of trip requests in the queue for the O-D pair $p$ at time $t$. Let $\Queuelength{}(t) = (\Queuelength{p}(t))$ be the vector of trip requests for all the O-D pairs at time $t$. We define the \emph{under-saturation} region of a policy $\pi$ as 
\begin{equation*}
    U_{\pi} = \{\Arrivalrate{}: \limsup_{t \to \infty} \E{\Queuelength{p}(t)} < \infty~~\forall p \in \routeset~ \text{under policy}~\pi\}.
\end{equation*}
\par
This is the set of $\Arrivalrate{}$'s for which the expected number of trip requests remain bounded for all the O-D pairs. The boundary of this set is called the throughput of the policy $\pi$. We are interested in finding a policy $\pi$ such that $U_{\pi'} \subseteq U_{\pi}$ for all policies $\pi'$, including those that have information about the demand $\Arrivalrate{}$. In other words, if the network remains under-saturated using some policy $\pi'$, then it also remains under-saturated using the policy $\pi$. In that case, we say that policy $\pi$ maximizes the throughput for the UAM network. In the next section, we introduce one such policy.

\section{Network-Wide Scheduling}\label{section:scheduling}
                  %%%%%%%%%%%%%%%%%%%%%%%%%%%%%%%%%%%%%%%%
      %%%%%%%%%%%%%%%%%%%%%%%%%%%%%%%%%%%%%%%%%%%%%%%%%%%%%%%%%%%%%
%%%%%%%%%%%%%%%%%%%%%%%%%%%%%%%%%%%%%%%%%%%%%%%%%%%%%%%%%%%%%%%%%%%%%%%%
%%%%%%%%%%%%%%%%%%%%%%%%%%%%%%%%%% Subsection %%%%%%%%%%%%%%%%%%%%%%%%%%%%%%%%%
%%%%%%%%%%%%%%%%%%%%%%%%%%%%%%%%%%%%%%%%%%%%%%%%%%%%%%%%%%%%%%%%%%%%%%%%
      %%%%%%%%%%%%%%%%%%%%%%%%%%%%%%%%%%%%%%%%%%%%%%%%%%%%%%%%%%%%%
                  %%%%%%%%%%%%%%%%%%%%%%%%%%%%%%%%%%%%%%%%

%%%%%%%%%%%%%%%%%%%%% arXiv %%%%%%%%%%%%%%%%%%%%%%
\mpcommentout{
%%%%%%%%%%%%%%%%%%%%%%%%%%%%%%%%%%%%%%%%%%%%%%%%%%%%%%%
%%%%%%%%%%%%%%%%%%%%% Figure %%%%%%%%%%%%%%%%%%%%%%%%%%
%%%%%%%%%%%%%%%%%%%%%%%%%%%%%%%%%%%%%%%%%%%%%%%%%%%%%%%
\begin{figure}[t]
    \centering
    \includegraphics[width=0.3\textwidth]{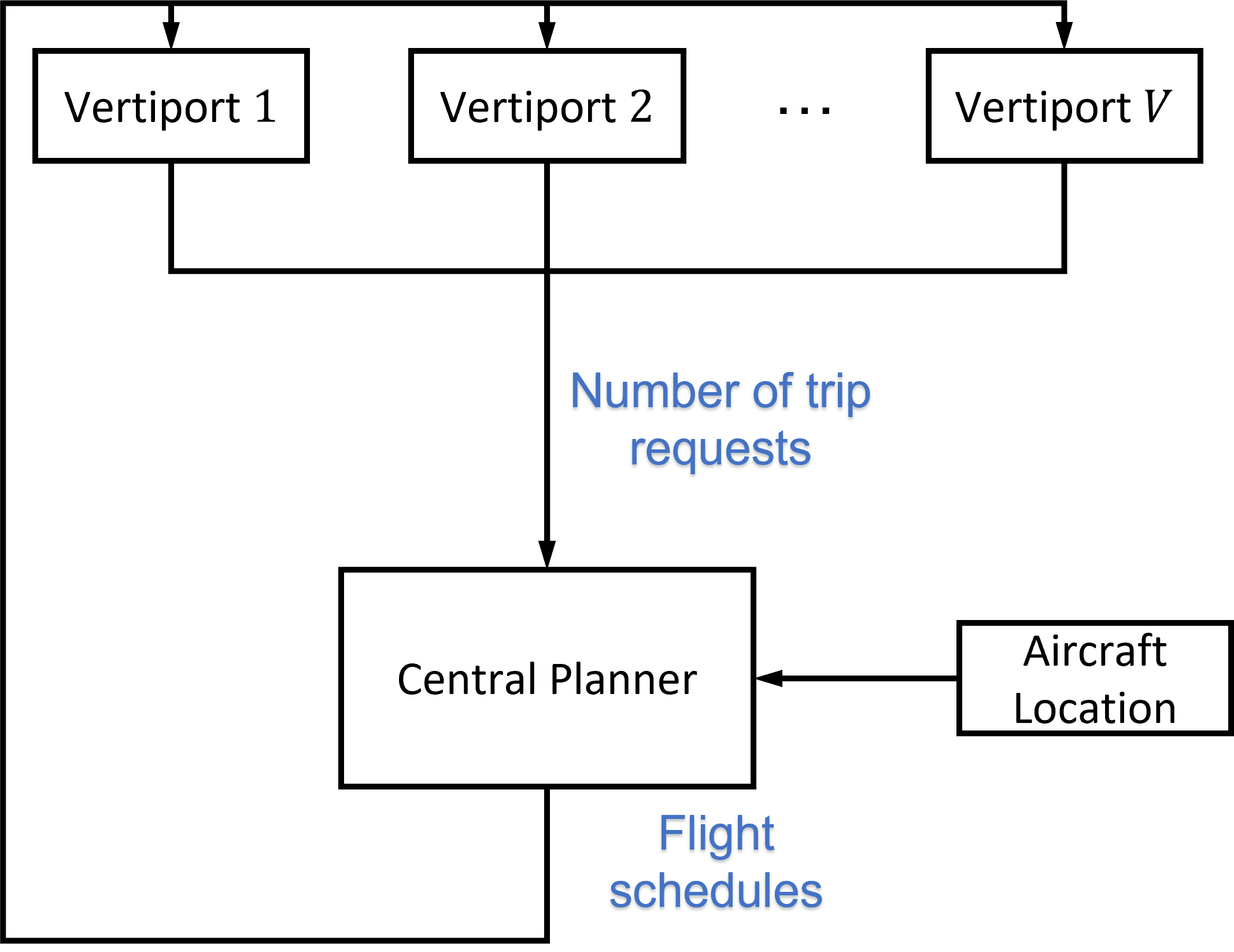}
    
    \vspace{0.1 cm}
    
    \caption{\sf The information flow in the VertiSync policy at the beginning of a cycle.}
    \label{fig:renewal-schematic}
\end{figure}
%%%%%%%%%%%%%%%%%%%%%%%%%%%%%%%%%%%%%%%%%%%%%%%%%%%%%%%
}
%%%%%%%%%%%%%%%%%%%%%%%%%%%%%%%%%%%%%%%%%%%%%%%%%%%%%%%
%%%%%%%%%%%%%%%%%%%%% Definition %%%%%%%%%%%%%%%%%%%%%%
%%%%%%%%%%%%%%%%%%%%%%%%%%%%%%%%%%%%%%%%%%%%%%%%%%%%%%%
\subsection{VertiSync Policy}
We now introduce our policy which is inspired by the queuing theory literature \cite{armony2003queueing} and the classical Traffic Flow Management Problem (TFMP) formulation \cite{bertsimas1998air}. The policy works in cycles during which only the trips that were requested before the start of the cycle are serviced. At the start of a cycle, a central planner schedules the aircraft for either servicing trip requests or rebalancing in the network until all the trips that were requested before the start of the cycle are serviced. The aircraft schedule during a cycle is communicated to vertiport operators responsible for takeoff and landing operations at each vertiport. It is assumed that the central planner knows the location and energy state of each aircraft as well as the number of trip requests for each O-D pair. The scheduling for service or rebalancing is done synchronously during a cycle, and hence the name of the policy. 
\par
To conveniently track aircraft locations in discrete time, we introduce the notion of \emph{slot}. For each O-D pair $p \in \routeset$, a slot represents a specific position along the route of that pair, such that if two aircraft with the same route occupy adjacent slots, then they satisfy the airborne safety margins. In particular, consider an aircraft's flight trajectory that satisfies the safety margins and separation requirements. At the end of every $\cruisetau$ minutes along the aircraft's route, a slot is associated with the aircraft's position, with the first slot located at the origin vertiport. If an aircraft occupies the first slot of an O-D pair $p$, it means that it is positioned on a vertipad at the origin vertiport $o_p$. Similarly, if it occupies the last slot of the O-D pair $p$, it means that it has landed on a vertipad at the destination vertiport $d_p$. Consider a configuration of slots for all the O-D pairs established at time $t=0$. Without loss of generality, we assume that if a link is common to two or more routes, then the slots associated with those routes coincide with each other on that link. Additionally, if two aircraft with different routes occupy adjacent slots, then they will satisfy the airborne safety margins with respect to each other. We also let the first and last slots on each link coincide with the tail and head of that link, respectively. We assign a unique identifier to each slot, with overlapping slots having different identifiers. Let $\slotset{p}$ be the set of slots associated with the O-D pair $p$, and let $S_p$ be its cardinality. 
\par
Let $t$ be a fixed time, and $n=0,1,\ldots$. A key decision variable in the VertiSync policy is $w_{i,n}^{a,p}$, where $w_{i,n}^{a,p} = k$ if aircraft $a \in \aircraftset$ has visited slot $i \in \slotset{p}$ of the O-D pair $p$, $k$ times in the interval $(t,t+n\cruisetau]$. For brevity, the time $t$ is dropped from $w_{i,n}^{a,p}$ as it will be clear from the context. By definition, $w_{i,n}^{a,p}$ is non-decreasing with respect to $n$. Moreover, if $w_{i,n}^{a,p}-w_{i,n-1}^{a,p} = 1$ for some $n \geq 1$, then aircraft $a$ has occupied slot $i$ at some time in the interval $(t+(n-1)\cruisetau, t+n\cruisetau]$. We use the notation $w_{o,n}^{a,p}$ to represent the number of times aircraft $a$ with route $p$ has taken off from vertiport $o_p$ in the interval $(t,t+n\cruisetau]$. Similarly, $w_{d,n}^{a,p}$ indicates the number of times aircraft $a$ with route $p$ has landed on vertiport $d_p$ in the interval $(t,t+n\cruisetau]$. For slot $i \in \slotset{p}$, $i+1$ denotes its following slot, i.e., the slot that comes after slot $i$ along the route of the O-D pair $p$. Given two O-D pairs $p, q \in \routeset$ and slots $i \in \slotset{p}$ and $j \in \slotset{q}$, we let $i=j$ if slot $i$ coincides with slot $j$. Finally, we use the binary variable $u_{v,n}^{a}$ to denote whether aircraft $a$ can begin its takeoff phase of the flight operation from vertiport $v \in \vertset$ at time $t+n\cruisetau$ (if $u_{v,n}^{a}=1$) or not (if $u_{v,n}^{a}=0$).
\begin{definition}\label{def:renewal-policy}
\textbf{(VertiSync Policy)} 
The policy works in cycles of \emph{variable} length, with the first cycle starting at time $t_1 = 0$. At the beginning of the $k$-th cycle at time $t_k$, each vertiport communicates the number of trip requests originating from that vertiport to the central planner, i.e., the vector of trip requests $\Queuelength{}(t_k) = (\Queuelength{p}(t_k))$ is communicated to the central planner. During the $k$-th cycle, only theses requests will be serviced.
\par
The central planner solves the following optimization problem to determine the aircraft schedule while minimizing the total airborne time of all the aircraft. That is, the central planner aims to minimize
\begin{equation}\label{eq:TFMP-objective}
\sum_{n=1}^{M_k}\sum_{a \in \aircraftset}\sum_{p \in \routeset}(w_{o,n}^{a,p}-w_{o,n-1}^{a,p})T_p,
\end{equation}
where $M_k \in \N$ is such that $M_k \cruisetau$ is a conservative upper-bound on the $k$-th cycle length, $w_{o,n}^{a,p}$ is the number of times aircraft $a$ with route $p \in \routeset$ has taken off from vertiport $o_p$ in the interval $(t_k,t_k+n\cruisetau]$, and $T_p$ is the flight time from vertiport $o_p$ to $d_p$ when the aircraft satisfies the airborne safety margins and separation requirements. The following constraints must be satisfied:
    \begin{subequations}\label{eq:TFMP-constraints}
        \begin{align}
            &\sum_{a \in \aircraftset}w_{o,M_k}^{a,p} \geq \Queuelength{p}(t_k),\hspace{1.65cm} \forall p \in \routeset, \label{eq:TFMP-constraint-serviceall} \\
             &w_{i,n-1}^{a,p}-w_{i,n}^{a,p} \leq 0, \notag \\ 
             &\hspace{1.5cm}\forall n \in [M_k],~ p \in \routeset,~i \in \slotset{p},~ a \in \aircraftset, \label{eq:TFMP-constraint-w-increasing}\\
             &\sum_{p \in \routeset}\sum_{i \in \slotset{p}}(w_{i,n}^{a,p}-w_{i,n-1}^{a,p}) \leq 1,\hspace{0.4cm}\forall n \in [M_k],~a \in \aircraftset, \label{eq:TFMP-constraint-aircraft-one-slot} \\
             &w_{i,n-1}^{a,p}=w_{i+1,n}^{a,p}, \notag \\
             &\hspace{1.4cm}\forall n \in [M_k],~p \in \routeset, ~i \in \slotset{p}:i \neq d,~a \in \aircraftset, \label{eq:TFMP-constraint-aircraft-next-slot}\\
             &\sum_{a \in \aircraftset}(w_{i,n}^{a,p}-w_{i,n-1}^{a,p}) + (w_{j,n}^{a,q}-w_{j,n-1}^{a,q}) \leq 1, \notag \\
             &\hspace{1.4cm}~\forall n \in [M_k],~ p,q \in \routeset,~ i \in \slotset{p}, j \in \slotset{q}: i=j, \notag\\
             &\hspace{4.88cm} i,j \neq o,d, \label{eq:TFMP-constraint-air-safety}\\
             &\sum_{a \in \aircraftset}\sum_{p \in \routeset: o_p=v}(w_{o,n}^{a,p}-w_{o,n-k_{\tau}}^{a,p}) \leq N_v, \notag\\
             &\hspace{1.4cm}~\forall n \in \{k_{\tau},\ldots,M_k\},~v \in \vertset, \label{eq:TFMP-constraint-takeoff-separation} \\
             &w_{o,n}^{a,p}-w_{o,n-1}^{a,p} \leq u_{v,n-k_{\takeofftau}}^{a}, \notag \\
             &\hspace{1.4cm}~\forall n \in \{k_{\takeofftau},\ldots,M_k\},~p \in \routeset:o_p=v,~a \in \aircraftset, \label{eq:TFMP-constraint-aircraft-availability} \\
             &u_{v,n-k_{\takeofftau}+1}^{a} = u_{v,n-k_{\takeofftau}}^{a} - (w_{o,n}^{a,p}-w_{o,n-1}^{a,p}),\notag\\
             &\hspace{1.4cm}~\forall n \in \{k_{\takeofftau},\ldots,M_k\},~p \in \routeset:o_p=v,~a \in \aircraftset, \label{eq:TFMP-constraint-parking-takeoff} \\
             &\sum_{a \in \aircraftset}\left(\sum_{p \in \routeset: o_p=v}(w_{o,n}^{a,p}-w_{o,n-1}^{a,p}) \right. \notag \\
             &\hspace{2cm} \left. +\sum_{q \in \routeset: d_q=v}(w_{d,n-1}^{a,q}-w_{d,n-u_{\tau}}^{a,q})\right) \leq N_v,\notag\\
             &\hspace{1.4cm}~\forall n \in \{k_{\tau},\ldots,M_k\},~v \in \vertset, \label{eq:TFMP-constraint-landing-separation} \\
             &u_{v,n}^{a} = u_{v,n-1}^{a} + (w_{d,n}^{a,p}-w_{d,n-1}^{a,p}),\notag\\
             &\hspace{1.4cm}~\forall n \in [M_k],~p \in \routeset:d_p=v,~a \in \aircraftset, \label{eq:TFMP-constraint-parking-landing}\\
             &w_{i,n}^{a,p} \in \N_{0},~u_{v,n}^{a} \in \{0,1\} \notag \\
             &\hspace{1.4cm}~\forall n \in [M_k],~v \in \vertset,~p \in \routeset,~i \in \slotset{p},~a \in \aircraftset. \notag
        \end{align}
    \end{subequations}
\par    
Constraint \eqref{eq:TFMP-constraint-serviceall} ensures that all the trip requests are serviced by the end of the cycle. Constraint \eqref{eq:TFMP-constraint-w-increasing} enforces the decision variable $w_{i,n}^{a,p}$ to be non-decreasing in time. Constraint \eqref{eq:TFMP-constraint-aircraft-one-slot} ensures that each aircraft $a$ occupies at most one slot in the network at any time, and constraint \eqref{eq:TFMP-constraint-aircraft-next-slot} guarantees that if aircraft $a$ occupies slot $i$ at some time $t \in (t_k+(n-1)\cruisetau,t_k+n\cruisetau]$, then it will occupy slot $i+1$ at time $t+\cruisetau$. Constraint \eqref{eq:TFMP-constraint-air-safety} ensures the airborne safety margins by allowing at most one aircraft occupying any overlapping or non-overlapping slot at any time. Similarly, constraints \eqref{eq:TFMP-constraint-takeoff-separation} and \eqref{eq:TFMP-constraint-landing-separation} ensure that the takeoff and landing separations are satisfied at every vertiport, respectively. Constraint \eqref{eq:TFMP-constraint-aircraft-availability} enforces that aircraft $a$ can take off from a vertiport at time $t$ only if it has landed at that vertiport at or before time $t-\takeofftau$, and constraints \eqref{eq:TFMP-constraint-parking-takeoff} and \eqref{eq:TFMP-constraint-parking-landing} update $u_{v,n}^{a}$ once aircraft $a$ takes off from vertiport $v$ and lands on vertiport $v$, respectively. 
\par
While traversing route $p$, aircraft $a$ expends energy $E_p$, which is calculated as the sum of the energy required for takeoff, cruise, and landing. Let $E_{n}^{a}$ be the remaining energy of aircraft $a$ at time $n$, and let $z_{v, n}^{a}$ be a binary variable indicating whether aircraft $a$ is being re-charged at time $n$ at vertiport $v$ (if $z_{v, n}^{a} = 1$) or not (if $z_{v, n}^{a} = 0$). In addition to the constraints in \eqref{eq:TFMP-constraints}, we require
\begin{subequations}\label{eq:TFMP-energy-constraints}
\begin{align}
    E_{n}^{a} &= E_{n-1}^{a} - \sum_{p \in \routeset}(w_{o,n}^{a,p}-w_{o,n-1}^{a,p})E_p + \sum_{v \in \vertset} z_{v, n}^{a} E_{\text{inc}}, \notag \\
    &\hspace{1.4cm}~\forall n \in [M_k],~a \in \aircraftset, \label{eq:TFMP-constraint-energy-balance} \\
    E_{n}^{a} &\geq 0, \notag \\
    &\hspace{1.4cm}~\forall n \in [M_k],~a \in \aircraftset, \label{eq:TFMP-constraint-energy-lower} \\
    E_{n}^{a} &\leq E_{\text{max}}, \notag \\
    &\hspace{1.4cm}~\forall n \in [M_k],~a \in \aircraftset, \label{eq:TFMP-constraint-energy-upper} \\
    z_{v, n}^{a} &\leq u_{v,n-k_{\takeofftau}}^{a}, \notag \\
    &\hspace{1.4cm}~\forall n \in \{k_{\takeofftau},\ldots,M_k\},~v \in \vertset,~a \in \aircraftset, \label{eq:TFMP-constraint-energy-binary}
\end{align}
\end{subequations}
where $E_{\text{inc}}$ is the energy increment of an aircraft during one time step while being recharged, and $E_{\text{max}}$ is the maximum energy of an aircraft. Constraint \eqref{eq:TFMP-constraint-energy-balance} is the balance equation for the energy state of aircraft $a$, and constraints \eqref{eq:TFMP-constraint-energy-lower} and \eqref{eq:TFMP-constraint-energy-upper} limit the minimum and maximum energy of aircraft $a$. Constraint \eqref{eq:TFMP-constraint-energy-binary} ensures that aircraft $a$ can be recharged at a vertiport only if it is available at that vertiport, i.e., $u_{v,n-k_{\takeofftau}}^{a} = 1$. 
\par
The initial values $u_{v,0}^{a}$, $z_{v, 0}^{a}$, and $w_{i,0}^{a,p}$ are determined by the location and energy state of aircraft $a$ at the end of the previous cycle. For example, if aircraft $a$ has occupied slot $i$ of the O-D pair $p$ at the end of cycle $k-1$, then $u_{v,0}^{a}=0$ for all $v \in \vertset$, $w_{j,0}^{a,p} = 1$ for slot $j=i$ and any other slot $j \in \slotset{p}$ that precedes slot $i$, i.e., comes before slot $i$ along the route of the O-D pair $p$, and $w_{j,0}^{a,q} = 0$ for any other $q \neq p$ and $j \in \slotset{q}$. %Similarly, if aircraft $a$ has parked at vertiport $v$, then $u_{v,0}^{a}=1$, $u_{v',0}^{a}=0$ for any other $v' \neq v$, and $w_{i,0}^{a,p} = 0$ for all $p \in \routeset$ and $i \in \slotset{p}$.
The $k$-th cycle ends once all the requests for that cycle have been serviced. 
%%%%%%%%%%%%%%%%%%%%%%%%%%%%%%%%%%%%%%%%%%%%%%%%%%%%%%%
%%%%%%%%%%%%%%%%%%%%% Remark %%%%%%%%%%%%%%%%%%%%%%%%%%
%%%%%%%%%%%%%%%%%%%%%%%%%%%%%%%%%%%%%%%%%%%%%%%%%%%%%%%
\begin{remark}
%The VertiSync policy is inspired by the queuing theory literature in the context of switched processing systems \cite{armony2003queueing}. However, unlike the policy considered in \cite{armony2003queueing}, 
Note that the VertiSync policy only requires real-time information about the number of trip requests, but does not require any information about the arrival rate. This makes VertiSync a suitable option for an actual UAM network where the arrival rate is unknown or could vary over time.
\end{remark}
%%%%%%%%%%%%%%%%%%%%%%%%%%%%%%%%%%%%%%%%%%%%%%%%%%%%%%%
\mpcommentout{
\begin{enumerate}
    \item solve the linear program 
        \begin{equation}\label{eq:linear-program}
        \begin{aligned}
            \text{Minimize}&~\sum_{i=1}^{\numsafesch}T_i \\
            \text{Subject to}&~ \sum_{i=1}^{\numsafesch}\routingvec{}{i}T_i \geq \Queuelength{}(t_k), \\
            &~ T_i \geq 0,~ i \in [\numsafesch],
        \end{aligned}
    \end{equation}
    where the inequality $\sum_{i=1}^{\numsafesch}\routingvec{}{i}T_i \geq \Queuelength{}(t_k)$ is considered component-wise. Let $T_{i}^{*}$, $i \in [\numsafesch]$, be the solution to \eqref{eq:linear-program}, and let $\mathcal{S}$ be the set of $\routingvec{}{i}$ for which $T_{i}^{*} > 0$.
    
    \item choose a vector $\routingvec{}{i} \in \mathcal{S}$ and determine a sequence of takeoff times at each vertipad allocated by $\routingvec{}{i}$ such that the takeoff rate is $1/\takeofftau$ and the safety margins and separation requirements are satisfied. Using the following steps, repeatedly schedule each aircraft until for every O-D pair $p \in \routeset$, $\routingvec{p}{i}T_{i}^{*}$ requests are serviced:   
    \begin{enumerate}[label={\theenumi.\arabic*)}]
        \item \textbf{Aircraft distribution}: split the $A$ aircraft among the O-D pairs according to some desired distribution. The assignment of each aircraft to an O-D pair can be done arbitrarily or according to some higher level logic.% as long as: (i) the total number of assigned aircraft to the O-D pairs having the same origin does not exceed the number of allocated takeoff vertipads plus the parking capacity at the origin vertiport, and (ii) the total number of assigned aircraft to the O-D pairs having the same destination does not exceed the parking capacity at the destination vertiport.
        Each aircraft will operate for its assigned O-D pair while $\routingvec{}{i}$ is active. \footnote{If for some O-D pair $p$, we have $\routingvec{p}{i} > 0$, but no aircraft is assigned due to the shortage of aircraft in the network, this step is repeated once all the other O-D pairs have serviced their requests.}
        \item \textbf{Synchronized service-and-rebalancing}: at time $t$: (i) if $t+\takeofftau$ is a feasible takeoff time for a vertipad, an aircraft is available, and a takeoff at time $t+\takeofftau$ will not violate the safety margins and separation requirements with respect to the rebalancing aircraft, schedule the aircraft for takeoff at time $t+\takeofftau$, (ii) for each aircraft that is out of balance, schedule its takeoff at the first available time that does not violate the safety margins and separation requirements with respect to all the aircraft. Communicate the takeoff times to the vertiport operators.
    \end{enumerate}
    
    \item once all the requests corresponding to $\routingvec{}{i}$ have been serviced, remove $\routingvec{}{i}$ from $\mathcal{S}$. If $\mathcal{S}$ is non-empty, return to step 2. Otherwise, restart the algorithm for the $(k+1)$-th cycle. 
\end{enumerate}
}
\end{definition}
\mpcommentout{
%%%%%%%%%%%%%%%%%%%%%%%%%%%%%%%%%%%%%%%%%%%%%%%%%%%%%%%
%%%%%%%%%%%%%%%%%%%%%%%%%%%%%%%%%%%%%%%%%%%%%%%%%%%%%%%
%%%%%%%%%%%%%%%%%%%%% Example %%%%%%%%%%%%%%%%%%%%%%%%%
%%%%%%%%%%%%%%%%%%%%%%%%%%%%%%%%%%%%%%%%%%%%%%%%%%%%%%%
\begin{example}
Consider again the setup in Example~\ref{ex:scheduling-vectors}, and suppose that the number of trip requests for each O-D pair at time $t_1=0$ is as follows: $\Queuelength{1}(0)=10$, $\Queuelength{2}(0)=20$, $\Queuelength{3}(0)=5$, and $\Queuelength{4}(0)=10$. Then, the solution to the linear program \eqref{eq:linear-program} is $T_{1}^{*}=T_{2}^{*}=T_{3}^{*}=T_{4}^{*}=0$, $T_{5}^{*}=10$, and $T_{6}^{*}=20$. Hence, under the VertiSync policy, the service vector $\routingvec{}{5}=(1,0,0,1)$ will become active to service $10$ requests for each of the O-D pairs $(1,3)$ and $(2,4)$. Once all these requests have been serviced, the service vector $\routingvec{}{6}=(0,1,1,0)$ will become active to service $20$ requests for the O-D pair $(1,4)$ and $5$ requests for the O-D pair $(2,3)$. The first cycle ends once all the requests have been serviced. Note that the cycle length in this example is $(T_{5}^{*} + T_{6}^{*})\takeofftau = 150$ minutes plus any time required to rebalance the aircraft in the network. 
\end{example}
%%%%%%%%%%%%%%%%%%%%%%%%%%%%%%%%%%%%%%%%%%%%%%%%%%%%%%%
}
                  %%%%%%%%%%%%%%%%%%%%%%%%%%%%%%%%%%%%%%%%
      %%%%%%%%%%%%%%%%%%%%%%%%%%%%%%%%%%%%%%%%%%%%%%%%%%%%%%%%%%%%%
%%%%%%%%%%%%%%%%%%%%%%%%%%%%%%%%%%%%%%%%%%%%%%%%%%%%%%%%%%%%%%%%%%%%%%%%
%%%%%%%%%%%%%%%%%%%%%%%%%%%%%%%%%% Subsection %%%%%%%%%%%%%%%%%%%%%%%%%%%%%%%%%
%%%%%%%%%%%%%%%%%%%%%%%%%%%%%%%%%%%%%%%%%%%%%%%%%%%%%%%%%%%%%%%%%%%%%%%%
      %%%%%%%%%%%%%%%%%%%%%%%%%%%%%%%%%%%%%%%%%%%%%%%%%%%%%%%%%%%%%
                  %%%%%%%%%%%%%%%%%%%%%%%%%%%%%%%%%%%%%%%%
\subsection{VertiSync Throughput}
We next characterize the throughput of the VeriSync policy. To this end, we introduce a $\routecard$-dimensional \emph{service vector} $\routingvec{}{i}=(\routingvec{p}{i})$. %, which represents the rate of takeoffs for each O-D pair $p \in \routeset$ such that the airborne safety margins and separation requirements are not violated.
If $\routingvec{}{i}$ is activated, then aircraft can continuously takeoff from the origin vertiport $o_p$ at the rate of $\routingvec{p}{i}$ per $\cruisetau$ minutes without violating the airborne safety margins and separation requirements. If $\routingvec{p}{i} = 0$, then the takeoff rate for the O-D pair $p$ is zero.
Let $\safeschset$ be the set of all non-zero service vectors, and $\numsafesch$ be its cardinality. We use $\routingvec{}{i} = (\routingvec{p}{i})$, $i \in [\numsafesch]$, to denote a particular vector in $\safeschset$. Note that each $\routingvec{}{i} \in \safeschset$ is associated with at least one schedule that, upon availability of aircraft, guarantees continuous takeoffs for the O-D pair $p$ at the rate of $\routingvec{p}{i}$ aircraft per $\cruisetau$ minutes without violating the safety margins and separation requirements. Recall the aircraft operational constraints from Section~\ref{section:op-constraints}, and note that $\routingvec{p}{i}$ is an integer multiple of $\cruisetau/\takeofftau$ and $0 \leq \routingvec{p}{i} \leq 1$.
\par
%%%%%%%%%%%%%%%%%%%%%%%%%%%%%%%%%%%%%%%%%%%%%%%%%%%%%%%
%%%%%%%%%%%%%%%%%%%%% arXiv %%%%%%%%%%%%%%%%%%%%%%
%\mpcommentout{
%%%%%%%%%%%%%%%%%%%%%%%%%%%%%%%%%%%%%%%%%%%%%%%%%%%%%%%
%%%%%%%%%%%%%%%%%%%%% Example %%%%%%%%%%%%%%%%%%%%%%%%%
%%%%%%%%%%%%%%%%%%%%%%%%%%%%%%%%%%%%%%%%%%%%%%%%%%%%%%%
\begin{example}\label{ex:scheduling-vectors}
Consider the network in Figure~\ref{fig:graph-example}. We number the O-D pairs $(1,3)$, $(1,4)$, $(2,3)$, $(2,4)$, $(3,1)$, $(4,2)$, $(1,2)$, and $(2,1)$ as $1$ to $8$, respectively. Suppose that each vertiport has only one vertipad. Let the takeoff separation be $\takeofftau = 5$ minutes, and $\cruisetau = 0.5$ minutes. Due to symmetry, if an aircraft for the O-D pair $1$ takes off at $t=0$, then an aircraft for the O-D pair $4$ can take off at $t=\cruisetau$ minute without violating the airborne safety margins. Therefore, $\routingvec{}{1}=(0.1,0,0,0.1,0,0,0,0)$ is a service vector in $\safeschset$ with the takeoff schedules $t=0, 5, 10, \ldots$, and $t=0.5, 5.5, 10.5, \ldots$, for the O-D pairs $1$ and $4$, respectively. Similarly, $\routingvec{}{2}=(0,0.1,0.1,0,0,0,0,0)$ and $\routingvec{}{3}=(0,0.1,0,0,0,0,0,0)$ are two other service vectors in $\safeschset$.
%\begin{equation*}
%    \begin{aligned}
%        \routingvec{}{2}&=(1,0,0,0),~\routingvec{}{3}=(0,0,0,1)\\
%        \routingvec{}{4}&=(0,1,1,0),~\routingvec{}{5}=(0,1,0,0)\\
%        \routingvec{}{6}&=(0,0,1,0),
%    \end{aligned}
%\end{equation*}
%are service vectors in $\safeschset$.
\end{example}
%%%%%%%%%%%%%%%%%%%%%%%%%%%%%%%%%%%%%%%%%%%%%%%%%%%%%%%
%}
%%%%%%%%%%%%%%%%%%%%% ITSC %%%%%%%%%%%%%%%%%%%%%%
By using the service vectors $\routingvec{}{i} \in \safeschset$, a feasible solution to the optimization problem \eqref{eq:TFMP-objective}-\eqref{eq:TFMP-energy-constraints} can be constructed as follows: (i) activate at most one service vector $\routingvec{}{i} \in \safeschset$ at any time, (ii) while $\routingvec{}{i}$ is active, schedule available aircraft to take off at the rate of $\routingvec{p}{i}$ per $\cruisetau$ minutes for any O-D pair $p \in \routeset$, (iii) switch to another service vector in $\safeschset$ provided that the safety margins and separation requirements are not violated after switching, and (iv) repeat (i)-(iii) until all the requests for the $k$-th cycle are serviced. 
\par
The next theorem provides an inner-estimate of the throughput of the VertiSync policy when the number of aircraft $\aircraftcard$ is sufficiently large and the following ``symmetry" assumption holds:
\begin{assumption}\label{assumption:symmetry}
    For any service vector $\routingvec{}{i} \in \safeschset$, there exists a service vector $\routingvec{}{j} \in \safeschset$ such that for all $p \in \routeset$ with $\routingvec{p}{i} > 0$, $\routingvec{p}{j} = \routingvec{p}{i}$ and $\routingvec{q}{j} = \routingvec{p}{i}$, where $q = (d_p,o_p)$ is the opposite O-D pair to the pair $p$. In words, by using the service vector $\routingvec{}{j}$, aircraft can continuously take off at the same rate for the O-D pair $p$ and its opposite pair $q$ without violating the safety margins and separation requirements. Let $\slotcard{}$ be the total number of slots, with overlapping slots being considered a single slot. 
\end{assumption}
%For an O-D pair $p \in [\routeset]$, let $C_p$ be the maximum number of aircraft that can be simultaneously airborne under any safe-by-construction policy in all the routes for the pair $p$ and their corresponding rebalancing routes. 
%%%%%%%%%%%%%%%%%%%%%%%%%%%%%%%%%%%%%%%%%%%%%%%%%%%%%%%
%%%%%%%%%%%%%%%%%%%%% Theorem %%%%%%%%%%%%%%%%%%%%%%%%%
%%%%%%%%%%%%%%%%%%%%%%%%%%%%%%%%%%%%%%%%%%%%%%%%%%%%%%%
\begin{theorem}\label{thm:renewal-sufficient}
If the UAM network satisfies the symmetry Assumption~\ref{assumption:symmetry}, and the number of aircraft satisfies
\begin{equation*}
    \aircraftcard \geq \slotcard{} + \sum_{p \in \routeset} \max_{i}\lceil\frac{E_{p}}{E_{\text{inc}}} \routingvec{p}{i} \rceil,
\end{equation*}
then the VertiSync policy can keep the network under-saturated for demands belonging to the set 
\begin{equation*}
    D^{\circ}=\{\Arrivalrate{}: \Arrivalrate{} < \sum_{i = 1}^{\numsafesch}\routingvec{}{i} x_i,~\text{\small for}~ x_i \geq 0,~ i \in [\numsafesch],~ \sum_{i = 1}^{\numsafesch}x_i \leq 1\},
\end{equation*}
where the vector inequality $\Arrivalrate{} < \sum_{i = 1}^{\numsafesch}\routingvec{}{i} x_i$ is considered component-wise.
\end{theorem}
\begin{proof}
%%%%%%%%%%%%%%%%%%%%% arXiv %%%%%%%%%%%%%%%%%%%%%%
%\mpcommentout{
See Appendix~\ref{section:proof-renewal-sufficient}.
%}
%%%%%%%%%%%%%%%%%%%%% ITSC %%%%%%%%%%%%%%%%%%%%%%
\mpcommentout{
The proof can be found in the arXiv version of this paper. 
}
\end{proof}
%%%%%%%%%%%%%%%%%%%%%%%%%%%%%%%%%%%%%%%%%%%%%%%%%%%%%%%
\mpcommentout{
The next result relaxes the symmetry assumption to a ``reversibility" assumption explained as follows:   
\begin{assumption}\label{assumption:reversibility}
    For any service vector $\routingvec{}{i} \in \safeschset$, there exists a service vector $\routingvec{}{j} \in \safeschset$ such that for all $p \in \routeset$ with $\routingvec{p}{i} > 0$, $\routingvec{q}{j} = \routingvec{p}{i}$, where $q = (d_p,o_p)$ is the opposite O-D pair to the pair $p$. In other words, if all the O-D pairs in $\routingvec{p}{i}$ are ``reversed", then the resulting vector is also a service vector.
\end{assumption}
%%%%%%%%%%%%%%%%%%%%%%%%%%%%%%%%%%%%%%%%%%%%%%%%%%%%%%%
%%%%%%%%%%%%%%%%%%%%% Theorem %%%%%%%%%%%%%%%%%%%%%%%%%
%%%%%%%%%%%%%%%%%%%%%%%%%%%%%%%%%%%%%%%%%%%%%%%%%%%%%%%
\begin{theorem}\label{thm:renewal-sufficient-2}
If the UAM network satisfies the reversibility Assumption~\ref{assumption:reversibility}, and the number of aircraft satisfies
\begin{equation*}
    \aircraftcard \geq \slotcard{} + \sum_{p \in \routeset} \max_{i}\lceil\frac{E_{p}}{E_{\text{inc}}} \routingvec{p}{i} \rceil,
\end{equation*}
then the VertiSync policy can keep the network under-saturated for demands belonging to the set 
\begin{equation*}
    D^{\circ}=\{\Arrivalrate{}: \Arrivalrate{} < \sum_{i = 1}^{\numsafesch}\routingvec{}{i} x_i,~\text{\small for}~ x_i \geq 0,~ i \in [\numsafesch],~ \sum_{i = 1}^{\numsafesch}x_i \leq 1\},
\end{equation*}
where the vector inequality $\Arrivalrate{} < \sum_{i = 1}^{\numsafesch}\routingvec{}{i} x_i$ is considered component-wise.
\end{theorem}
\begin{proof}
See Appendix~\ref{section:proof-renewal-sufficient-2}.
\end{proof}
%%%%%%%%%%%%%%%%%%%%%%%%%%%%%%%%%%%%%%%%%%%%%%%%%%%%%%%
}
                  %%%%%%%%%%%%%%%%%%%%%%%%%%%%%%%%%%%%%%%%
      %%%%%%%%%%%%%%%%%%%%%%%%%%%%%%%%%%%%%%%%%%%%%%%%%%%%%%%%%%%%%
%%%%%%%%%%%%%%%%%%%%%%%%%%%%%%%%%%%%%%%%%%%%%%%%%%%%%%%%%%%%%%%%%%%%%%%%
%%%%%%%%%%%%%%%%%%%%%%%%%%%%%%%%%% Subsection %%%%%%%%%%%%%%%%%%%%%%%%%%%%%%%%%
%%%%%%%%%%%%%%%%%%%%%%%%%%%%%%%%%%%%%%%%%%%%%%%%%%%%%%%%%%%%%%%%%%%%%%%%
      %%%%%%%%%%%%%%%%%%%%%%%%%%%%%%%%%%%%%%%%%%%%%%%%%%%%%%%%%%%%%
                  %%%%%%%%%%%%%%%%%%%%%%%%%%%%%%%%%%%%%%%%
\subsection{Fundamental Limit on Throughput}
In this section, we provide an outer-estimate on the throughput of any \emph{safe-by-construction} policy. A safe-by-construction policy is a policy that guarantees before takeoff that the aircraft's entire route will be clear and a veripad will be available for landing. Since the UAM aircraft have limited energy reserves, it is desirable to use safe-by-construction policies for traffic management purposes \cite{thipphavong2018urban}. 
\par
Any safe-by-construction policy uses the service vectors in $\safeschset$, either explicitly or implicitly, to schedule the aircraft.
%%%%%%%%%%%%%%%%%%%%% arXiv %%%%%%%%%%%%%%%%%%%%%%
\mpcommentout{
This is done by activating one or multiple service vectors from $\safeschset$, scheduling the aircraft to take off at the rates specified by the activated service vectors, and switching between service vectors provided that the safety margins and separation requirements are not violated after switching.
}
%%%%%%%%%%%%%%%%%%%%% ITSC %%%%%%%%%%%%%%%%%%%%%%
Although it is possible for a safe-by-construction policy to activate multiple service vectors at any time, we may restrict ourselves to policies that activate at most one service vector from $\safeschset$ at any time. This restriction does not affect the generality of safe-by-construction policies being considered; by activating at most one service vector at any time and rapidly switching between service vectors in $\safeschset$, it is possible to achieve an exact or arbitrarily close approximation of any safe schedule while ensuring the safety margins and separation requirements. 
\par
The next result provides a fundamental limit on the throughput of any safe-by-construction policy.
%%%%%%%%%%%%%%%%%%%%%%%%%%%%%%%%%%%%%%%%%%%%%%%%%%%%%%%
%%%%%%%%%%%%%%%%%%%%% Theorem %%%%%%%%%%%%%%%%%%%%%%%%%
%%%%%%%%%%%%%%%%%%%%%%%%%%%%%%%%%%%%%%%%%%%%%%%%%%%%%%%
\begin{theorem}\label{thm:necessary}
If a safe-by-construction policy $\pi$ keeps the network under-saturated, then the demand must belong to the set 
\begin{equation*}
\begin{aligned}
    D=\{\Arrivalrate{}: \Arrivalrate{} \leq \sum_{i = 1}^{\numsafesch}\routingvec{}{i} x_i,~\text{\small for}~ x_i \geq 0, ~i \in [\numsafesch],~ \sum_{i = 1}^{\numsafesch}x_i \leq 1\},
\end{aligned}
\end{equation*}
where the vector inequality $\Arrivalrate{} \leq \sum_{i = 1}^{\numsafesch}\routingvec{}{i} x_i$ is considered component-wise.
\end{theorem}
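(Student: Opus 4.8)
The plan is to prove the converse by a long‑run rate‑conservation (flow‑balance) argument: bound cumulative takeoffs for each O‑D pair by the time‑averaged service vectors, combine this with the queue identity and under‑saturation, and then extract from the time‑averages a convex combination of service vectors dominating $\Arrivalrate{}$ via a compactness step.

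First I would fix a safe‑by‑construction policy $\pi$ that keeps the network under‑saturated and, invoking the reduction discussed just before the theorem, assume without loss of generality that $\pi$ has at most one service vector from $\safeschset$ active at each time step. For $m \in \N$, let $N_i(m)$ be the (random) number of steps in $\{1,\ldots,m\}$ at which $\routingvec{}{i}$ is active, so that $\sum_{i=1}^{\numsafesch} N_i(m) \le m$ pathwise. Let $\Numdepartures{p}(m)$ be the number of trip requests for O‑D pair $p$ serviced by step $m$; since each service requires a takeoff for $p$ from $o_p$, and since the takeoff rate for $p$ is at most $\routingvec{p}{i}$ per $\cruisetau$ minutes whenever $\routingvec{}{i}$ is active, we get $\Numdepartures{p}(m) \le \sum_{i=1}^{\numsafesch}\routingvec{p}{i}\,N_i(m) + \mathcal{E}_p(m)$, where $\mathcal{E}_p(m)$ collects the boundary effects incurred at the instants the active service vector changes. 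With $A_p(m)$ the cumulative arrivals, the queue identity reads $\Queuelength{p}(m) = \Queuelength{p}(0) + A_p(m) - \Numdepartures{p}(m) \ge 0$.

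Next I would take expectations and normalize by $m$. Using $\E{A_p(m)} = \Arrivalrate{p} m$ (Bernoulli$(\Arrivalrate{p})$ arrivals), $\Queuelength{p}(m)\ge 0$, and the fact that under‑saturation gives $\sup_m \E{\Queuelength{p}(m)} =: C_p < \infty$, we obtain
\begin{align*}
\Arrivalrate{p}\, m &= \E{\Numdepartures{p}(m)} - \Queuelength{p}(0) + \E{\Queuelength{p}(m)} \\
&\le \sum_{i=1}^{\numsafesch}\routingvec{p}{i}\,\E{N_i(m)} + \E{\mathcal{E}_p(m)} + C_p .
\end{align*}
Setting $x_i^{(m)} := \E{N_i(m)}/m$, we have $x_i^{(m)} \ge 0$, $\sum_{i=1}^{\numsafesch} x_i^{(m)} \le 1$, and $\Arrivalrate{p} \le \sum_{i=1}^{\numsafesch}\routingvec{p}{i} x_i^{(m)} + \E{\mathcal{E}_p(m)}/m + C_p/m$ for every $p \in \routeset$. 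The vectors $x^{(m)}$ lie in the compact simplex‑like set $\{x \ge 0:\ \sum_i x_i \le 1\}$, so a subsequence $m_k \to \infty$ converges to some $x^\star$ in that set; passing to the limit (the last two terms vanish) gives $\Arrivalrate{p} \le \sum_{i=1}^{\numsafesch}\routingvec{p}{i} x^\star_i$ for all $p$, i.e. $\Arrivalrate{} \in D$.

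The main obstacle is controlling $\mathcal{E}_p(m)$, since rapid switching between service vectors — which is explicitly permitted — could in principle let these boundary terms accumulate linearly in $m$. The resolution is to show that each switch costs only a network‑dependent constant (a partially completed takeoff cycle cannot be carried over a switch), and then either to argue that a throughput‑optimal policy need switch only $o(m)$ times, or, more robustly, to charge the takeoffs in excess of $\sum_i \routingvec{p}{i} N_i(m)$ to a bounded potential (e.g. the number of slots currently in transit), yielding $\E{\mathcal{E}_p(m)} = O(1)$ uniformly in $m$ so that $\E{\mathcal{E}_p(m)}/m \to 0$. Everything else — the flow identity, the expectation of the arrival process, and the Bolzano–Weierstrass step — is routine.
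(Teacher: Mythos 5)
Your overall architecture is the same as the proof the paper points to (the paper omits its own argument, citing the rate-conservation proof of \cite[Proposition 2.1]{armony2003queueing}): write the flow identity $\Queuelength{p}(m)=\Queuelength{p}(0)+A_p(m)-\Numdepartures{p}(m)$, dominate cumulative departures by the time-averaged usage of the service vectors in $\safeschset$, use under-saturation to make $\E{\Queuelength{p}(m)}/m$ vanish, and extract the weights $x^{\star}$ by compactness of $\{x\ge 0:\sum_i x_i\le 1\}$; since $D$ is defined with weak inequalities it is closed and the limit point certifies $\Arrivalrate{}\in D$. Those steps are all sound.

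The one genuine gap is exactly where you locate it, but neither of your proposed repairs works as stated. You cannot argue that the policy "need switch only $o(m)$ times": the theorem quantifies over \emph{every} safe-by-construction policy, and the paper's reduction to at most one active service vector is justified precisely by allowing rapid, i.e.\ order-$m$, switching. And no bounded potential can force $\E{\mathcal{E}_p(m)}=O(1)$ under your per-step attribution: consider a policy that declares a vector $\routingvec{}{i}$ with $\routingvec{p}{i}=1/k_{\tau}$ active only during the isolated steps in which a pair-$p$ takeoff occurs (one takeoff every $k_{\tau}$ steps, which is physically safe). Then $\Numdepartures{p}(m)\approx m/k_{\tau}$ while $\sum_i \routingvec{p}{i}N_i(m)\approx m/k_{\tau}^2$, so the excess is $\Theta(m)$ even though the realized long-run rate lies comfortably in $D$; the discrepancy is an artifact of how activity is attributed, not of the physics, and it cannot be absorbed by any $O(1)$ quantity such as the number of airborne aircraft. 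The fix consistent with the paper's framing is to invoke its standing reduction in quantitative form: any safe schedule is exactly, or arbitrarily closely, a time multiplexing of vectors in $\safeschset$, so for every $\eps>0$ one may choose the attribution so that $\Numdepartures{p}(m)\le \sum_{i=1}^{\numsafesch}\routingvec{p}{i}N_i(m)+\eps m+C_{\eps}$ (equivalently, credit service continuously at rate $\routingvec{p}{i}$ while $\routingvec{}{i}$ is active, as in the cited proposition, which removes the rounding term altogether). Your limiting argument then gives $\Arrivalrate{p}\le \sum_{i=1}^{\numsafesch}\routingvec{p}{i}x_i^{\star}+\eps$ for every $p$ and every $\eps>0$, and closedness of $D$ completes the proof.
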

\begin{proof}
See Appendix \ref{section:proof-necessary}.
%The proof is similar to the proof of \cite[Proposition 2.1]{armony2003queueing} and is omitted for brevity.
\end{proof}
\mpcommentout{
%%%%%%%%%%%%%%%%%%%%%%%%%%%%%%%%%%%%%%%%%%%%%%%%%%%%%%%
%%%%%%%%%%%%%%%%%%%%% Figure %%%%%%%%%%%%%%%%%%%%%%%%%%
%%%%%%%%%%%%%%%%%%%%%%%%%%%%%%%%%%%%%%%%%%%%%%%%%%%%%%%
\begin{figure}[t]
    \centering
    \includegraphics[width=0.38\textwidth]{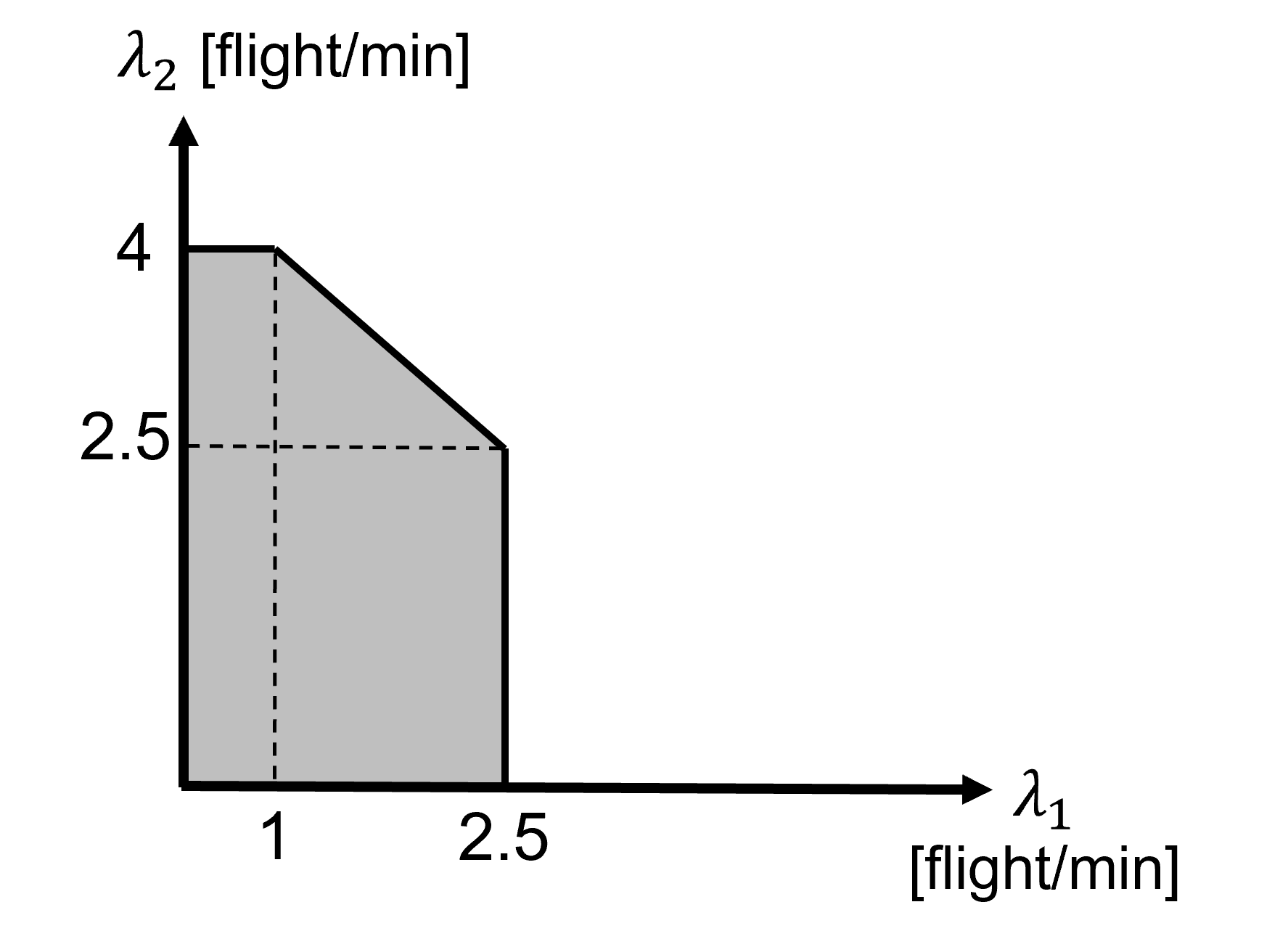}
    
    \vspace{0.1 cm}
    
    \caption{\sf The outer estimate of the under-saturation region of any safe-by-construction policy from Example~\ref{ex:outer-estimate}.}
    \label{fig:outer-estimate}
\end{figure}
%%%%%%%%%%%%%%%%%%%%%%%%%%%%%%%%%%%%%%%%%%%%%%%%%%%%%%%

%%%%%%%%%%%%%%%%%%%%%%%%%%%%%%%%%%%%%%%%%%%%%%%%%%%%%%%
%%%%%%%%%%%%%%%%%%%%% Example %%%%%%%%%%%%%%%%%%%%%%%%%
%%%%%%%%%%%%%%%%%%%%%%%%%%%%%%%%%%%%%%%%%%%%%%%%%%%%%%%
\begin{example}\label{ex:outer-estimate}
Consider the setup in Example~\ref{ex:scheduling-vectors}. Suppose that the only demand in the network is for the O-D pairs $(1,3)$, $(1,4)$, $(2,3)$, and $(2,4)$. By Theorem~\ref{thm:necessary}, the under-saturation region $U_{\pi}$ of any safe-by-construction policy $\pi$ is a subset of
\begin{equation*}
\begin{aligned}
            D = \{\Arrivalrate{}:&~ \Arrivalrate{1} \leq x_1 + x_5,~ \Arrivalrate{2} \leq x_2 + x_6,~ \Arrivalrate{3} \leq x_3 + x_6, \\
            &~\Arrivalrate{4} \leq x_4 + x_5,~\text{for}~x_i\geq 0,~ i=1,\ldots,4, \\
            &~ \text{with}~ x_1+\ldots+x_4 \leq 1\}.    
\end{aligned}
\end{equation*}
\par
For instance, if $\Arrivalrate{3} = 2.5$ [flight/min] and $\Arrivalrate{4} = 1$ [flight/min], then the projection of the set $D$ onto the $(\Arrivalrate{1},\Arrivalrate{2})$-plane is shown in Figure~\ref{fig:outer-estimate}.
\end{example}
%%%%%%%%%%%%%%%%%%%%%%%%%%%%%%%%%%%%%%%%%%%%%%%%%%%%%%%
}
                  %%%%%%%%%%%%%%%%%%%%%%%%%%%%%%%%%%%%%%%%
      %%%%%%%%%%%%%%%%%%%%%%%%%%%%%%%%%%%%%%%%%%%%%%%%%%%%%%%%%%%%%
%%%%%%%%%%%%%%%%%%%%%%%%%%%%%%%%%%%%%%%%%%%%%%%%%%%%%%%%%%%%%%%%%%%%%%%%
%%%%%%%%%%%%%%%%%%%%%%%%%%%%%%%%%% Subsection %%%%%%%%%%%%%%%%%%%%%%%%%%%%%%%%%
%%%%%%%%%%%%%%%%%%%%%%%%%%%%%%%%%%%%%%%%%%%%%%%%%%%%%%%%%%%%%%%%%%%%%%%%
      %%%%%%%%%%%%%%%%%%%%%%%%%%%%%%%%%%%%%%%%%%%%%%%%%%%%%%%%%%%%%
                  %%%%%%%%%%%%%%%%%%%%%%%%%%%%%%%%%%%%%%%%
\mpcommentout{
\subsection{Optimal Allocation of Scheduling Vectors}
In this section, we present the main building block of our policy. Consider a scenario where for each O-D pair $p \in \routeset$, $W_p$ aircraft need to use the O-D pair $p$ to service trip requests and/or be rebalanced. Our goal is to find a flight schedule that allocates the minimum possible time to each scheduling vector in $\safeschset$. To achieve this, we first solve the following linear program:
    \begin{equation}\label{eq:linear-program}
        \begin{aligned}
            \text{Minimize}&~\sum_{i=1}^{\numsafesch}T_i \\
            \text{Subject to}&~ \sum_{i=1}^{\numsafesch}\routingvec{}{i}T_i \geq W,
        \end{aligned}
    \end{equation}
where $W = [W_p]$ and the inequality $\sum_{i=1}^{\numsafesch}\routingvec{}{i}T_i \geq W$ is considered component-wise. Let $T_{i}^{*}$, $i \in [\numsafesch]$, be the solution to the linear program \eqref{eq:linear-program}. We then calculate the flight schedules by switching the system into $\routingvec{}{i}$, using $\routingvec{}{i}$ for a duration of $(T_{i}^{*}+1)\takeofftau$, and then switch into the next scheduling vector provided that the new takeoffs do not violate the safety margins and separation requirements after switching. 
}

\section{Simulation Results}\label{section:simulation}
%%%%%%%%%%%%%%%%%%%%%%%%%%%%%%%%%%%%%%%%%%%%%%%%%%%%%%%
%%%%%%%%%%%%%%%%%%%%% Figure %%%%%%%%%%%%%%%%%%%%%%%%%%
%%%%%%%%%%%%%%%%%%%%%%%%%%%%%%%%%%%%%%%%%%%%%%%%%%%%%%%
\begin{figure}[t]
    \centering
    \includegraphics[width=0.35\textwidth]{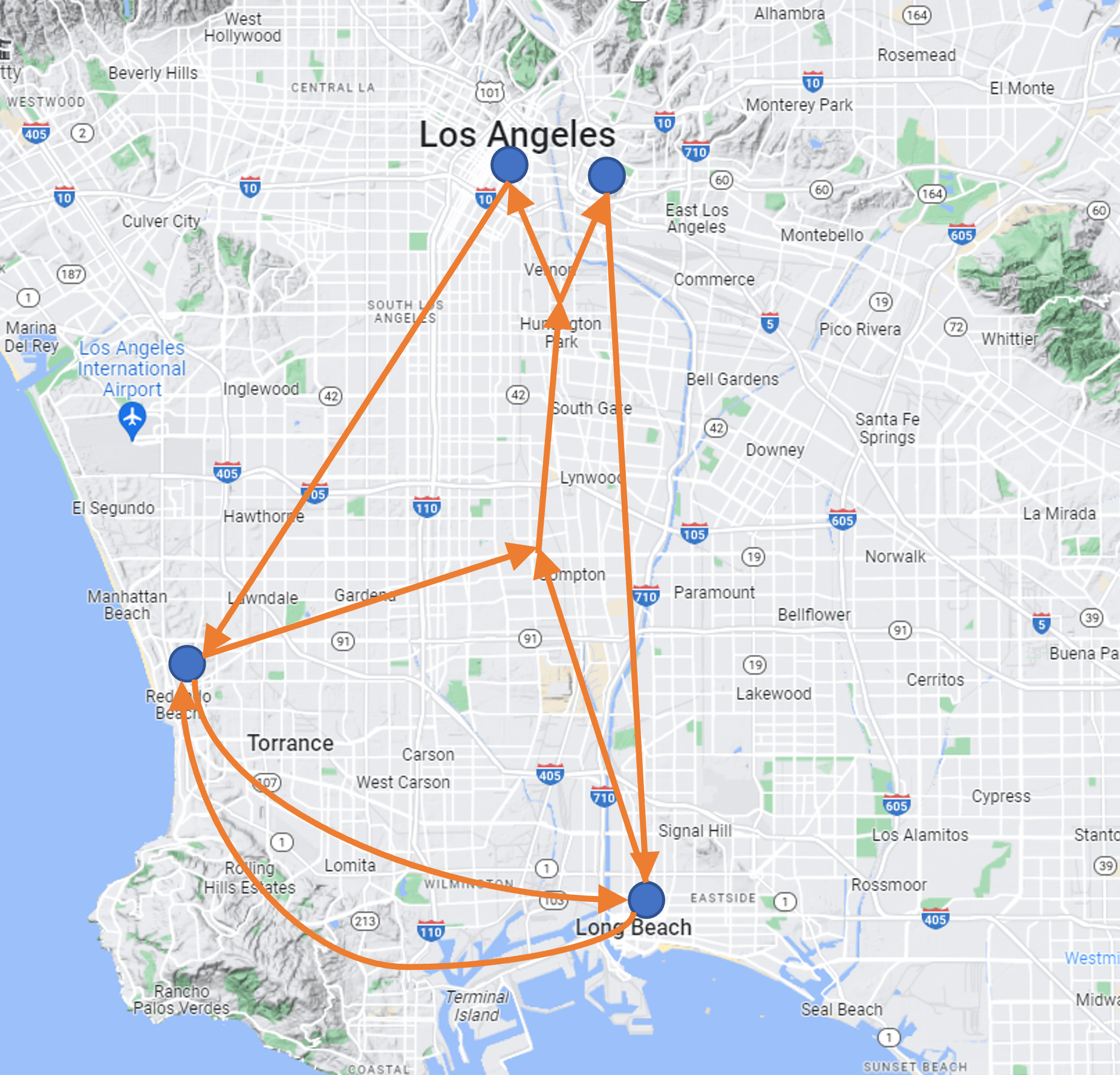}
    
    \vspace{0.1 cm}
    
    \caption{\sf The UAM network for the city of Los Angeles. The blue circles show the vertiports and the orange arrows show the links.}
    \label{fig:LA-case-study}
\end{figure}
%%%%%%%%%%%%%%%%%%%%%%%%%%%%%%%%%%%%%%%%%%%%%%%%%%%%%%%
%%%%%%%%%%%%%%%%%%%%%%%%%%%%%%%%%%%%%%%%%%%%%%%%%%%%%%%
%%%%%%%%%%%%%%%%%%%%% Figure %%%%%%%%%%%%%%%%%%%%%%%%%%
%%%%%%%%%%%%%%%%%%%%%%%%%%%%%%%%%%%%%%%%%%%%%%%%%%%%%%%
\begin{figure}[t]
    \centering
    \includegraphics[width=0.35\textwidth]{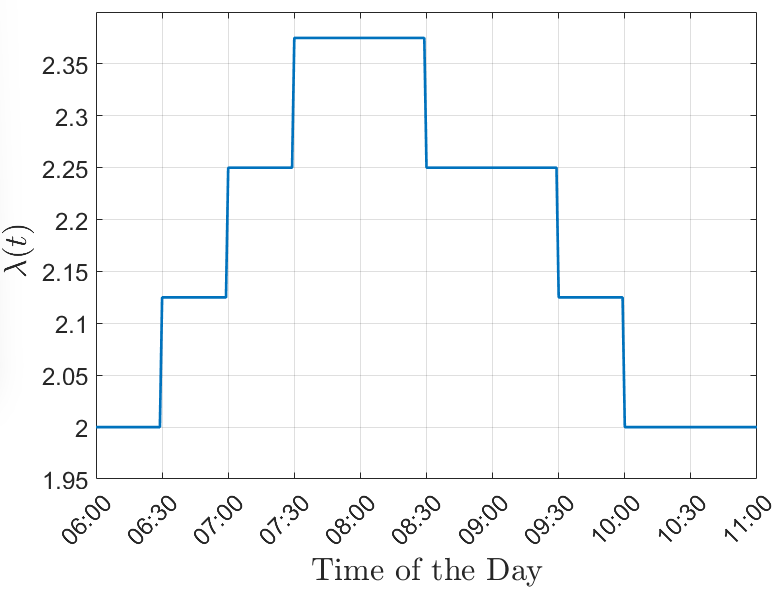}
    
    \vspace{0.1 cm}
    
    \caption{\sf The rate of trip requests per $\takeofftau$ minutes ($\Arrivalrate{}(t)$).}
    \label{fig:arrival-rate}
\end{figure}
%%%%%%%%%%%%%%%%%%%%%%%%%%%%%%%%%%%%%%%%%%%%%%%%%%%%%%%
In this section, we demonstrate the performance of the VertiSync policy and compare it with a heuristic traffic management policy from the literature. As a case study, we select the city of Los Angeles, which is anticipated to be an early adopter market due to the severe road congestion, existing infrastructure, and mild weather \cite{vascik2017constraint}. All the simulations were performed in MATLAB.
\par
We consider four vertiports located in Redondo Beach (vertiport $1$), Long Beach (vertiport $2$), and the Downtown Los Angeles area (vertiports $3$ and $4$). The choice of vertiport locations is adopted from \cite{vascik2017constraint}. Each vertiport is assumed to have $10$ vertipads. Figure~\ref{fig:LA-case-study} shows the network structure, where there are $8$ O-D pairs. We let the takeoff and landing separations $\takeofftau$ be $5$~[min] and let $\cruisetau = 0.5$~[min]. We let the flight time for the O-D pairs $(1,2)$ and $(2,1)$ be $5$~[min], and for the rest of the O-D pairs be $8$~[min]. We simulate this network during the morning period from 6:00-AM to 11:00-AM, during which the majority of demand originates from vertiports $1$ and $2$ to vertiports $3$ and $4$. We let the trip requests for each of the O-D pairs $(1,3)$, $(1,4)$, $(2,3)$, and $(2,4)$ follow a Poisson process with a piece-wise constant rate $\Arrivalrate{}(t)$. The demand for other O-D pairs is set to zero during the morning period. With a slight abuse of notation, we scale $\Arrivalrate{}(t)$ to represent the number of trip requests per $\takeofftau$ minutes. From Theorem~\ref{thm:necessary}, given $\Arrivalrate{}(t) = \Arrivalrate{}$, the necessary condition for the network to remain under-saturated is that $\Arrivalrate{} \leq \takeofftau/4\cruisetau = 2.5$ trip requests per $\takeofftau$ minutes, i.e., $\rho := 4\Arrivalrate{}\cruisetau/\takeofftau \leq 1$. Figure~\ref{fig:arrival-rate} shows $\Arrivalrate{}(t)$, where we have considered a heavy demand between 7:00-AM to 9:30-AM to model the morning rush hour, i.e., $\rho(t) = 4\Arrivalrate{}(t)\cruisetau/\takeofftau \in [0.9,1)$ between 7:00-AM to 9:30-AM. %However, it is assumed that the rate of trip requests per $\takeofftau$ minutes is less than $2.5$. 
\par
%%%%%%%%%%%%%%%%%%%%%%%%%%%%%%%%%%%%%%%%%%%%%%%%%%%%%%%
%%%%%%%%%%%%%%%%%%%%% Figure %%%%%%%%%%%%%%%%%%%%%%%%%%
%%%%%%%%%%%%%%%%%%%%%%%%%%%%%%%%%%%%%%%%%%%%%%%%%%%%%%%
\begin{figure}[t]
    \centering
    \includegraphics[width=0.35\textwidth]{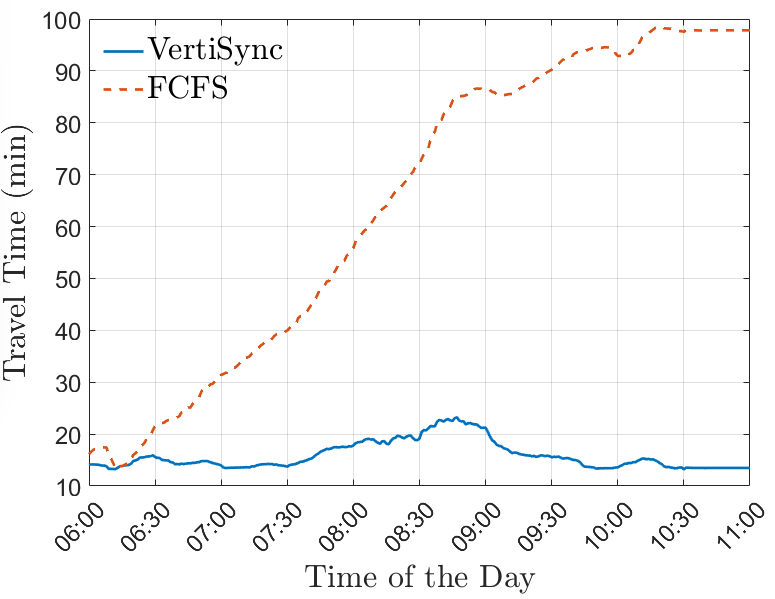}
    
    \vspace{0.1 cm}
    
    \caption{\sf The travel time under the VertiSync and FCFS policies for the demand $\Arrivalrate{}(t)$.}
    \label{fig:travel-time-comparison}
\end{figure}
%%%%%%%%%%%%%%%%%%%%%%%%%%%%%%%%%%%%%%%%%%%%%%%%%%%%%%%
%%%%%%%%%%%%%%%%%%%%%%%%%%%%%%%%%%%%%%%%%%%%%%%%%%%%%%%
%%%%%%%%%%%%%%%%%%%%% Figure %%%%%%%%%%%%%%%%%%%%%%%%%%
%%%%%%%%%%%%%%%%%%%%%%%%%%%%%%%%%%%%%%%%%%%%%%%%%%%%%%%
\begin{figure}[t]
    \centering
    \includegraphics[width=0.35\textwidth]{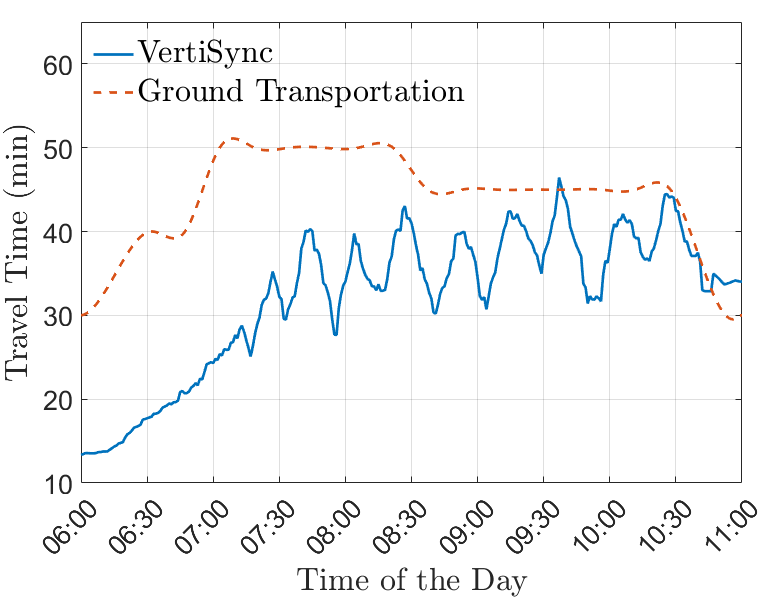}
    
    \vspace{0.1 cm}
    
    \caption{\sf The travel time under the VertiSync policy when the demand is increased to $1.2\Arrivalrate{}(t)$ (over-saturated regime), and the ground transportation travel time.}
    \label{fig:travel-time-ground}
\end{figure}
%%%%%%%%%%%%%%%%%%%%%%%%%%%%%%%%%%%%%%%%%%%%%%%%%%%%%%%
We first evaluate the travel time under our policy and the \emph{First-Come First-Serve} (FCFS) policy \cite{pradeep2018heuristic}. The FCFS policy is a heuristic policy which schedules the trip requests in the order of their arrival at the earliest time that does not violate the safety margins and separation requirements. We let the number of aircraft be $\aircraftcard = 32$, and assume that all of them are initially located at vertiport $1$. We also assume that an aircraft is always available to service a trip request at its scheduled time under the FCFS policy. Finally, the optimization problem \eqref{eq:TFMP-objective}-\eqref{eq:TFMP-constraints} in the VertiSync policy is solved analytically. This approach is made possible by the simple symmetrical network structure considered in the simulations. 
\par
For the above demand and a random simulation seed, $518$ trips are requested during the morning period from which the FCFS policy services $411$ before 11:00~AM while the VertiSync policy is able to service all of them. Figure~\ref{fig:travel-time-comparison} shows the (passenger) travel time, which is computed by averaging the travel time of all trips requested within each $10$-minute time interval. The travel time of a trip is computed from the moment that trip is requested until it is completed, i.e., reached its destination. As expected, the VertiSync policy keeps the network under-saturated since $\Arrivalrate{}(t) < 2.5$ for all $t$. However, the FCFS policy fails to keep the network under-saturated due to its greedy use of the vertipads and UAM airspace which is inefficient. 
\par
We next evaluate the demand threshold at which the VertiSync policy becomes less efficient than ground transportation. Figure~\ref{fig:travel-time-ground} shows the travel time under the VertiSync policy when the demand is increased to $1.2\Arrivalrate{}(t)$. By Theorem~\ref{thm:necessary}, the network is in the over-saturated regime from 6:30-AM to 10:00-AM since $1.2\Arrivalrate{}(t) > 2.5$ trip requests per $\takeofftau$ minutes. However, as shown in Figure~\ref{fig:travel-time-ground}, the travel time is still less than the ground travel time during the morning period. The ground travel times are collected using the Google Maps service from 6:00-AM to 11:00-AM on Thursday, May 19, 2023 from Long Beach to Downtown Los Angeles (The travel times from Redondo Beach to Downtown Los Angeles were similar). 
\par
%%%%%%%%%%%%%%%%%%%%% arXiv %%%%%%%%%%%%%%%%%%%%%%
\mpcommentout{
%%%%%%%%%%%%%%%%%%%%%%%%%%%%%%%%%%%%%%%%%%%%%%%%%%%%%%%
%%%%%%%%%%%%%%%%%%%%% Figure %%%%%%%%%%%%%%%%%%%%%%%%%%
%%%%%%%%%%%%%%%%%%%%%%%%%%%%%%%%%%%%%%%%%%%%%%%%%%%%%%%
\begin{figure}[t]
    \centering
    \includegraphics[width=0.35\textwidth]{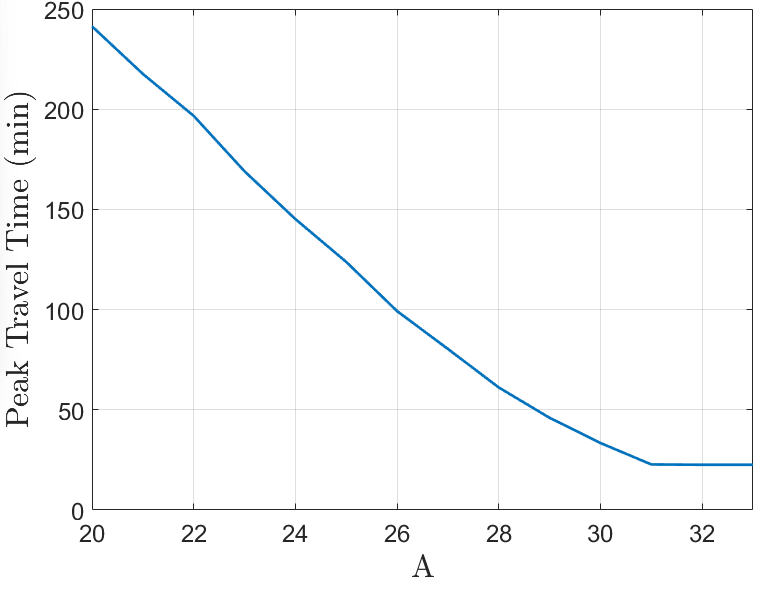}
    
    \vspace{0.1 cm}
    
    \caption{\sf The effect of the number of aircraft $\aircraftcard$ on peak travel time under the VertiSync policy and demand $\Arrivalrate{}(t)$.}
    \label{fig:travel-time-fleetsize}
\end{figure}
%%%%%%%%%%%%%%%%%%%%%%%%%%%%%%%%%%%%%%%%%%%%%%%%%%%%%%%
Finally, we evaluate the effect of the number of aircraft $\aircraftcard$ on the peak travel time under the VertiSync policy. Figure~\ref{fig:travel-time-fleetsize} shows the peak travel time versus the number of aircraft, where the peak travel time is averaged over $1000$ simulation rounds with different seeds. For the demand $\Arrivalrate{}(t)$, the necessary number of aircraft to keep the network under-saturated is $\aircraftcard = 31$. Moreover, the travel time does not improve as $\aircraftcard$ exceeds $32$. 
}

\section{Conclusion}\label{section:conclusion}
In this paper, we provided a traffic management policy for on-demand UAM networks and analyzed its throughput. We conducted a case study for the city of Los Angeles and showed that our policy significantly improves travel time compared to a first-come first-serve policy. We plan to expand our case study to more complex networks with more origin and destination pairs and study the computational requirements of the optimization problem in our policy. We also plan to implement our policy in a high-fidelity air traffic simulator.

\bibliographystyle{ieeetr}
\bibliography{References_main}

\begin{thebibliography}{10}

\bibitem{holden2016uber}
J.~Holden and N.~Goel, ``Uber elevate: Fast-forwarding to a future of on-demand urban air transportation. uber technologies,'' {\em Inc., San Francisco, CA}, 2016.

\bibitem{mueller2017enabling}
E.~R. Mueller, P.~H. Kopardekar, and K.~H. Goodrich, ``Enabling airspace integration for high-density on-demand mobility operations,'' in {\em 17th AIAA Aviation Technology, Integration, and Operations Conference}, p.~3086, 2017.

\bibitem{bertsimas1998air}
D.~Bertsimas and S.~S. Patterson, ``The air traffic flow management problem with enroute capacities,'' {\em Operations research}, vol.~46, no.~3, pp.~406--422, 1998.

\bibitem{bertsimas2011integer}
D.~Bertsimas, G.~Lulli, and A.~Odoni, ``An integer optimization approach to large-scale air traffic flow management,'' {\em Operations research}, vol.~59, no.~1, pp.~211--227, 2011.

\bibitem{chin2023protocol}
C.~Chin, K.~Gopalakrishnan, H.~Balakrishnan, M.~Egorov, and A.~Evans, ``Protocol-based congestion management for advanced air mobility,'' {\em Journal of Air Transportation}, vol.~31, no.~1, pp.~35--44, 2023.

\bibitem{pradeep2018heuristic}
P.~Pradeep and P.~Wei, ``Heuristic approach for arrival sequencing and scheduling for evtol aircraft in on-demand urban air mobility,'' in {\em 2018 IEEE/AIAA 37th Digital Avionics Systems Conference (DASC)}, pp.~1--7, IEEE, 2018.

\bibitem{bosson2018simulation}
C.~Bosson and T.~A. Lauderdale, ``Simulation evaluations of an autonomous urban air mobility network management and separation service,'' in {\em 2018 Aviation Technology, Integration, and Operations Conference}, p.~3365, 2018.

\bibitem{pavone2012robotic}
M.~Pavone, S.~L. Smith, E.~Frazzoli, and D.~Rus, ``Robotic load balancing for mobility-on-demand systems,'' {\em The International Journal of Robotics Research}, vol.~31, no.~7, pp.~839--854, 2012.

\bibitem{POOLADSANJ2023TRC}
M.~Pooladsanj, K.~Savla, and P.~A. Ioannou, ``Ramp metering to maximize freeway throughput under vehicle safety constraints,'' {\em Transportation Research Part C: Emerging Technologies}, vol.~154, p.~104267, 2023.

\bibitem{vascik2017constraint}
P.~D. Vascik and R.~J. Hansman, ``Constraint identification in on-demand mobility for aviation through an exploratory case study of los angeles,'' in {\em 17th AIAA Aviation Technology, Integration, and Operations Conference}, p.~3083, 2017.

\bibitem{armony2003queueing}
M.~Armony and N.~Bambos, ``Queueing dynamics and maximal throughput scheduling in switched processing systems,'' {\em Queueing systems}, vol.~44, no.~3, pp.~209--252, 2003.

\bibitem{thipphavong2018urban}
D.~P. Thipphavong, R.~Apaza, B.~Barmore, V.~Battiste, B.~Burian, Q.~Dao, M.~Feary, S.~Go, K.~H. Goodrich, J.~Homola, {\em et~al.}, ``Urban air mobility airspace integration concepts and considerations,'' in {\em 2018 Aviation Technology, Integration, and Operations Conference}, p.~3676, 2018.

\bibitem{meyn2012markov}
S.~P. Meyn and R.~L. Tweedie, {\em Markov chains and stochastic stability}.
\newblock Springer Science \& Business Media, 2012.

\end{thebibliography}

%%%%%%%%%%%%%%%%%%%%% arXiv %%%%%%%%%%%%%%%%%%%%%%
\appendices

\section{Proof of Theorem \ref{thm:renewal-sufficient}}\label{section:proof-renewal-sufficient}
Consider the $(k+1)$-th cycle. We first construct a feasible solution to the optimization problem \eqref{eq:TFMP-objective}-\eqref{eq:TFMP-energy-constraints} by using the service vectors in $\safeschset$. Consider the Linear Program (LP) 
\begin{equation}\label{eq:linear-program}
    \begin{aligned}
        \text{Minimize}&~\sum_{i=1}^{\numsafesch}K_i \\
        \text{Subject to}&~ \sum_{i=1}^{\numsafesch}\routingvec{}{i}K_i \geq \Queuelength{}(t_{k+1}), \\
        &~ K_i \geq 0,~ i \in [\numsafesch],
    \end{aligned}
\end{equation}
where the inequality $\sum_{i=1}^{\numsafesch}\routingvec{}{i}K_i \geq \Queuelength{}(t_{k+1})$ is considered component-wise. Let $K_{i}^{*}$, $i \in [\numsafesch]$, be the solution to \eqref{eq:linear-program}. A feasible solution to the optimization problem \eqref{eq:TFMP-objective}-\eqref{eq:TFMP-energy-constraints} can be constructed as follows: 
\begin{enumerate}
    \item Choose a service vector $\routingvec{}{i} \in \safeschset$ with $K_{i}^{*} > 0$. Without loss of generality, we may assume that $\routingvec{}{i}$ is such that for all $p \in \routeset$ with $\routingvec{p}{i} > 0$, $\routingvec{p}{j} = \routingvec{p}{i}$ and $\routingvec{q}{j} = \routingvec{p}{i}$, where $q = (d_p,o_p)$ is the opposite O-D pair to the pair $p$. Before activating $\routingvec{}{i}$, distribute the aircraft in the system so that for any $p \in \routeset$ with $\routingvec{p}{i} > 0$, there is
    \begin{equation*}
       \frac{\slotcard{p}}{k_{\takeofftau}-k_{\takeofftau}\routingvec{p}{i}+1} + \lceil\frac{E_{p}}{E_{\text{inc}}}\routingvec{p}{i}\rceil 
    \end{equation*}
    aircraft at vertiport $o_p$. The initial distribution of aircraft takes at most $\aircraftcard \overline{T}$ minutes, where $\overline{T} = \max_{p \in \routeset}T_p$. 
    \item Once the initial distribution is completed and the airspace is empty, activate $\routingvec{}{i}$ for a duration of $(K_{i}^{*}+1)\cruisetau$ minutes. During this time, aircraft with route $p$ take off (upon availability) at the rate $\routingvec{p}{i}$ from vertiport $o_p$. Hence, at most $\slotcard{p}/(k_{\takeofftau}-k_{\takeofftau}\routingvec{p}{i}+1)$ slots of route $p$ will be occupied by an aircraft during this time. In addition, an aircraft needs to have at least $E_p$ energy to traverse route $p$. Since the recharging takes $E_p/E_{\text{inc}}$ time steps, if $\lceil \routingvec{p}{i} E_p/E_{\text{inc}} \rceil$ aircraft with energy of at least $E_p$ are available at vertiport $o_p$, then we can ensure continuous takeoffs at the rate $\routingvec{p}{i}$. From step 1, the number of aircraft at vertiport $o_p$ is $\slotcard{p}/(k_{\takeofftau}-k_{\takeofftau}\routingvec{p}{i}+1) + \lceil \routingvec{p}{i} E_p/E_{\text{inc}} \rceil$. Similarly, there is also enough aircraft at veriport $d_p$, and from the assumption $\routingvec{q}{j} = \routingvec{p}{i}$, they can simultaneously take off at the rate $\routingvec{p}{i}$. Therefore, we can ensure continuous takeoffs at the rate $\routingvec{p}{i}$ for route $p$, which implies that, at the end of this step, $\routingvec{p}{i}K_{i}^{*}$ requests will be serviced for the O-D pair $p$.  
    \item Once step 2 is completed and the airspace is empty, repeat steps 1 and 2 for another vector in $\safeschset$. The amount of time it takes for the airspace to become empty at the end of step 2 is at most $\overline{T}$ minutes. Once each service vector $\routingvec{}{i} \in \safeschset$ with $K_{i}^{*} > 0$ have been activated, $\sum_{i=1}^{\numsafesch}\routingvec{p}{i}K_{i}^{*}$ requests will be serviced for each O-D pair $p \in \routeset$. From the constraint of the LP \eqref{eq:linear-program}, $\sum_{i=1}^{\numsafesch}\routingvec{}{i}K_{i}^{*} \geq \Queuelength{}(t_{k+1})$, i.e., all the requests for the $(k+1)$-th cycle will be serviced and the cycle ends. 
\end{enumerate}
By combining the time each of the above steps takes, it follows that
\begin{equation}\label{eq:cycle-upper-bound}
    \Cyclelength(k+1) \leq \sum_{i=1}^{\numsafesch}K_{i}^{*} + \frac{\numsafesch}{\cruisetau}(\aircraftcard \overline{T} + \overline{T} + \cruisetau),
\end{equation}
where $\Cyclelength(k+1) = (t_{k+2} - t_{k+1})/\cruisetau$.
\par
Without loss of generality, we assume that the ordering by which $\routingvec{}{i}$'s are chosen at each cycle are fixed and, when a cycle ends, the next cycle starts once the airspace becomes empty and the start time of is a multiple of $\cruisetau$. Finally, we assume that the initial distribution of aircraft before each $\routingvec{}{i}$ is activated takes $\aircraftcard \overline{T}$ minutes. With these assumptions, we can cast the network as a discrete-time Markov chain with the state $\{\Queuelength{}(t_k)\}_{k \geq 1}$. Since the state $\Queuelength{}(t_k) = 0$ is reachable from all other states, and $\mathbb{P}\left(\Queuelength{}(t_{k+1}) = 0~|~ \Queuelength{}(t_k) = 0\right) > 0$, the chain is irreducible and aperiodic. Consider the function $f: \Z_{+}^{\routecard} \ra [0,\infty)$
\begin{equation*}
    \Lyap{\Queuelength{}(t_k)} = \Cyclelength^{2}(k),
\end{equation*}
where $\Z_{+}^{\routecard}$ is the set of $\routecard$-tuples of non-negative integers. Note that $\Cyclelength(k)$ is a non-negative integer from our earlier assumption that the cycle start times are a multiple of $\cruisetau$. We let $\Lyap{\Queuelength{}(t_k)} \equiv \Lyap{t_k}$ for brevity.
%%%%%%%%%%%%%%%%%%%%% arXiv %%%%%%%%%%%%%%%%%%%%%%
%Our objective is to show that
%\begin{equation*}
%    \E{f(t_{k+1}) - f(t_k) \middle|~ \Queuelength{}(t_k)} \leq -\delta \|\Queuelength{}(t_k)\|_{\infty} + b^2\mathds{1}_{B},
%\end{equation*}
%for some positive constants $\delta, b$ and a finite set $B$.
\par
%%%%%%%%%%%%%%%%%%%%%%%%%%%%%%%%%%%%%%%%%%%%%%%%%%

%\begin{enumerate}
%    \item an initial amount of time for the assigned aircraft to be rebalanced. Since for each O-D pair $p \in \routeset$ with $\routingvec{p}{i} > 0$, the rebalancing can be done simultaneously with servicing without violating the safety margins, this stage takes at most $\overline{T}$ minutes (once the first rebalancing aircraft lands, it can be used to service a trip request).
%    \item the amount time it takes to schedule $\routingvec{p}{i}T_i^{*}$ requests for all $p \in \routeset$. Since the number of assigned aircraft is $C_p$ and the servicing and rebalancing can be done simultaneously, this stage takes at most $(T_i^{*}+1)\takeofftau$ minutes.
%    \item the amount of time it takes for the last request assigned to $\routingvec{}{i}$ to reach its destination. This stage takes at most $\overline{T}$ minutes.
%\end{enumerate}
We start by showing that
\begin{equation}\label{eq:cycle-decreases-expectation}
    \limsup_{n \ra \infty}\E{\left(\frac{\Cyclelength(k+1)}{\Cyclelength(k)}\right)^2~\middle|~ \Cyclelength(k)=n} < 1.
\end{equation}
\par
To show \eqref{eq:cycle-decreases-expectation}, let $\Cyclelength(k)=n$, and let $\overline{A_p}(t_k,t_{k+1})$ be the cumulative number of trip requests for the O-D pair $p \in \routeset$ during the time interval $[t_k,t_{k+1})$. Note that $\Queuelength{p}(t_{k+1})=\overline{A_p}(t_k,t_{k+1})$, which implies from the strong law of large numbers that, with probability one, 
\begin{equation*}
    \lim_{n \ra \infty}\frac{\Queuelength{p}(t_{k+1})}{n} = \Arrivalrate{p}.
\end{equation*}
\par
%%%%%%%%%%%%%%%%%%%%% arXiv %%%%%%%%%%%%%%%%%%%%%%
%Since this holds for any O-D pair $p \in \routeset$, it follows that, with probability one, 
%\begin{equation*}
%    \lim_{n \ra \infty}\frac{\Queuelength{}(t_{k+1})}{n} = \Arrivalrate{}.
%\end{equation*}
%\par
%%%%%%%%%%%%%%%%%%%%%%%%%%%%%%%%%%%%%%%%%%%%%%%%%%
By the assumption of the theorem, $\Arrivalrate{} \in D^{\circ}$. Hence, with probability one, there exists $N' > 0$ such that for all $n > N'$ we have $\Queuelength{}(t_{k+1})/n \in D^{\circ}$. Since $D^{\circ}$ is an open set, for a given $n > N'$, there exists non-negative $x_1, x_2, \ldots, x_{\numsafesch}$ with $\sum_{i=1}^{\numsafesch}x_i < 1$ such that $\Queuelength{}(t_{k+1})/n < \sum_{i=1}^{\numsafesch}\routingvec{}{i}x_i$. For $i \in [\numsafesch]$, define $K_i := nx_i$. Then, $K_i$'s are a feasible solution to the LP \eqref{eq:linear-program}, and $\sum_{i=1}^{\numsafesch}K_i < n$. Therefore, from \eqref{eq:cycle-upper-bound} and with probability one, it follows for all $n > N'$ that
\begin{equation*}
    \begin{aligned}
        \Cyclelength(k+1) &\leq \sum_{i=1}^{\numsafesch}K_{i}^{*} + \frac{\numsafesch}{\cruisetau}(\aircraftcard \overline{T} + \overline{T} + \cruisetau) \\
        &< n + \frac{\numsafesch}{\cruisetau}(\aircraftcard \overline{T} + \overline{T} + \cruisetau),
    \end{aligned}
\end{equation*}
which in turn implies, with probability one, that
\begin{equation}\label{eq:cycle-decreases-w.p.1}
    \limsup_{n \ra \infty}\left(\frac{\Cyclelength(k+1)}{n}\right)^2 < 1.
\end{equation}
\par
Finally, since the number of trip requests for each O-D pair is at most $1$ per $\cruisetau$ minutes, the sequence $\{\Cyclelength(k+1)/n\}_{n=1}^{\infty}$ is upper bounded by an integrable function. Hence, from \eqref{eq:cycle-decreases-w.p.1} and the Fatou's Lemma \eqref{eq:cycle-decreases-expectation} follows.
%%%%%%%%%%%%%%%%%%%%% arXiv %%%%%%%%%%%%%%%%%%%%%%
%it follows that
%\begin{equation*}
%\begin{aligned}
%    \limsup_{n \ra \infty}&~ \E{\left(\frac{\Cyclelength(k+1)}{n}\right)^2} \\
%    \leq&~ \E{\limsup_{n \ra \infty}\left(\frac{\Cyclelength(k+1)}{n}\right)^2} < 1,
%\end{aligned}
%\end{equation*}
%which gives \eqref{eq:cycle-decreases-expectation}.
%The upper bound follows from the following fact: the number of trip requests for each O-D pair is at most $1$ per $\cruisetau$ minutes. Thus, if $\Cyclelength(k)=n$, then the total number of trip requests for all the O-D pairs at the beginning of the $(k+1)$-th cycle is at most $n\routecard$. 
%Moreover, each trip request takes at most $2(\overline{T}+\takeofftau)$ minutes to be serviced and reach its destination. Therefore, $\Cyclelength(k+1) \leq 2 n \routecard (\overline{T}/\cruisetau + k_{\takeofftau})$, which implies the upper bound $\Cyclelength(k+1)/n \leq 2\routecard (\overline{T}/\cruisetau + k_{\takeofftau})$, as desired. 
%%%%%%%%%%%%%%%%%%%%%%%%%%%%%%%%%%%%%%%%%%%%%%%%%%
\par
We will now use \eqref{eq:cycle-decreases-expectation} to show that the network is under-saturated. Note that \eqref{eq:cycle-decreases-expectation} implies that there exists $\delta \in (0,1)$ and $N$ such that for all $n > N$ we have
\begin{equation*}
    \E{\left(\frac{\Cyclelength(k+1)}{\Cyclelength(k)}\right)^2 \middle|~ \Cyclelength(k) = n} < 1 - \delta,
\end{equation*}
which in turn implies that
\begin{equation*}
    \E{\Cyclelength^2(k+1) - \Cyclelength^2(k) \middle|~ \Cyclelength(k) > N} < -\delta \Cyclelength^2(k).
\end{equation*}
\par
Furthermore, $\Queuelength{p}(t_k) \leq \Cyclelength(k) \leq \Cyclelength^2(k)$ for all $p \in \routeset$, where the first inequality follows from the fact that $t_{k+1}-t_{k} \geq \Queuelength{p}(t_k)\cruisetau$ for any O-D pair $p \in \routeset$. Therefore, $\E{\Cyclelength^2(k+1) - \Cyclelength^2(k) \middle|~ \Cyclelength(k) > N} < -\delta\|\Queuelength{}(t_k)\|_{\infty}$, where $\|\Queuelength{}\|_{\infty} = \max_{p}\Queuelength{p}$. Finally, if $\Cyclelength(k) \leq N$, then $\Cyclelength(k+1) \leq 2 N \routecard (\overline{T}/\cruisetau + k_{\takeofftau}) =: b$. Therefore, 
\begin{equation*}
\begin{aligned}
    \E{\Cyclelength^2(k+1) \middle|~ \Cyclelength(k) \leq N} &\leq b^2 \\
    & +\Cyclelength^2(k)-\delta\|\Queuelength{}(t_k)\|_{\infty},
\end{aligned}
\end{equation*}
where we have used $\delta\|\Queuelength{}(t_k)\|_{\infty} \leq \|\Queuelength{}(t_k)\|_{\infty} \leq \Cyclelength^2(k)$. Combining all the previous steps gives
\begin{equation*}
    \E{f(t_{k+1}) - f(t_k) \middle|~ \Queuelength{}(t_k)} \leq -\delta \|\Queuelength{}(t_k)\|_{\infty} + b^2\mathds{1}_{B},
\end{equation*}
where $B = \{\Queuelength{}(t_k): f(t_k) \leq N^2\}$ (a finite set). From this and the well-known Foster-Lyapunov drift criterion \cite[Theorem 14.0.1]{meyn2012markov}, it follows that $\limsup_{t \to \infty} \E{\Queuelength{p}(t)} < \infty$ for all $p \in \routeset$, i.e., the network is under-saturated.

\section{Proof of Theorem \ref{thm:necessary}}\label{section:proof-necessary}
We use a proof by contradiction. Suppose that some safe-by-construction policy $\pi$ keeps the network under-saturated but $\Arrivalrate{} \notin D$. Then, for any non-negative $x_1, x_2, \ldots, x_{\numsafesch}$ with $\sum_{i =1}^{\numsafesch}x_i \leq 1$, there exists some O-D pair $p \in \routeset$ such that $\Arrivalrate{p} > \sum_{i = 1}^{\numsafesch}\routingvec{p}{i} x_i$.
\par
Without loss of generality, we may assume that whenever the service vector $\routingvec{}{i}$ becomes active, it remains active for a time interval that is a multiple of $\takeofftau$. Given $k \in \N_{0}$, let $t_k := k\takeofftau$, and let $x_i(t_k)$ be the proportion of time that the service vector $\routingvec{}{i}$ has been active under the policy $\pi$ up to time $t_k$. Then, $x_i := \limsup_{k \ra \infty}x_i(t_k) \geq 0$ for all $i \in [\numsafesch]$ and $\sum_{i = 1}^{\numsafesch}x_i \leq 1$. Therefore, there exists $p \in \routeset$ such that $\Arrivalrate{p} > \sum_{i = 1}^{\numsafesch}\routingvec{p}{i} x_i$. Note that when the service vector $\routingvec{}{i}$ is active, the flight requests for the O-D pair $p$ are serviced at the rate of at most $\routingvec{p}{i}/\takeofftau$. Hence, the number flight requests for the O-D pair $p$ that have been serviced by $\routingvec{}{i}$ up to time $t_k$ is at most $\routingvec{p}{i}x_i(t_k)t_k/\takeofftau$ up to time $t_k$. Let $\overline{A_p}(t_k) \equiv \overline{A_p}(0,t_k)$ be the cumulative number of flight requests for the O-D pair $p$ up to time $t_k$. We have
\begin{equation*}
    \Queuelength{p}(t_k) \geq \Queuelength{p}(0) + \overline{A_p}(t_k) - \sum_{i = 1}^{\numsafesch}\frac{\routingvec{p}{i} x_i(t_k) t_k}{\takeofftau},
\end{equation*}
which implies
\begin{equation*}
    \frac{\Queuelength{p}(t_k)}{t_k} \geq \frac{\Queuelength{p}(0)}{t_k} + \frac{\overline{A_p}(t_k)}{t_k} - \frac{1}{\takeofftau}\sum_{i = 1}^{\numsafesch}\routingvec{p}{i} x_i(t_k).
\end{equation*}
\par
By letting $k \ra \infty$, it follows from the strong law of large numbers that, with probability one,
\begin{equation*}
    \liminf_{k \ra \infty} \frac{\Queuelength{p}(t_k)}{k} \geq \Arrivalrate{p} - \sum_{i = 1}^{\numsafesch}\routingvec{p}{i} x_i.
\end{equation*}
\par
Since $\Arrivalrate{p} > \sum_{i = 1}^{\numsafesch}\routingvec{p}{i}x_i$, then, with probability one, $\liminf_{k \ra \infty} \Queuelength{p}(t_k)/k$ is bounded away from zero. Hence, 
\begin{equation*}
  \liminf_{k \ra \infty} \Queuelength{p}(t_k) = \infty.  
\end{equation*}
\par
Combining this with Fatou's Lemma imply that the expected number of flight requests for the O-D pair $p$ grows unbounded. This contradicts the network being under-saturated.

\end{document}